\declaretheorem{theorem}
\declaretheorem[sibling=theorem]{lemma}
\newtheorem{corollary}[theorem]{Corollary}
\theoremstyle{definition}
\newcommand{\figref}[1]{Figure \ref{fig:#1}}
\newcommand{\lemref}[1]{Lemma \ref{lemma:#1}}
\newcommand{\corref}[1]{Corollary \ref{cor#1}}
\newcommand{\theoref}[1]{Theorem \ref{theo:#1}}
\newcommand{\theorefX}[1]{\ref{theo:#1}}
\newcommand{\algref}[1]{Algorithm \ref{alg:#1}}
\renewcommand{\eqref}[1]{({\ref{eq:#1}})}
\newcommand{\lemlab}[1]{\label{lemma:#1}}
\newcommand{\corlab}[1]{\label{cor#1}}
\newcommand{\theolab}[1]{\label{theo:#1}}
\newcommand{\seclab}[1]{\label{sec:#1}}
\newcommand{\alglab}[1]{\label{alg:#1}}
\newcommand{\eqlab}[1]{\label{eq:#1}}
\newcommand{\iprod}[2]{\left\langle {#1},{#2}\right\rangle}
\newcommand{\pdot}{\dot{p}}
\begin{document}

\title{Anchored boundary conditions for locally isostatic networks}
\author{Louis Theran}
\email[Email: ]{louis.theran@aalto.fi}
\homepage[Web: ]{http://theran.lt}
\affiliation{Aalto Science Institute (AScI) and Department of Computer Science
(CS),
Aalto University, PO Box 15500, 00076 Aalto, Finland}

\author{Anthony Nixon}
\email[Email: ]{a.nixon@lancaster.ac.uk}
\homepage[Web: ]{http://www.lancaster.ac.uk/maths/about-us/people/anthony-nixon}
\affiliation{Department of Mathematics and Statistics, Lancaster University,
Lancaster LA1 4YF, England}

\author{Elissa Ross}
\email[Email: ]{elissa.ross@meshconsultants.ca}
\homepage[Web: ]{http://www.elissaross.ca}
\affiliation{MESH Consultants Inc., Fields Institute for Research in the
Mathematical Sciences, 222 College Street, Toronto, ON, M5T 3J1, Canada}

\author{Mahdi Sadjadi}
\email[Email: ]{ssadjadi@asu.edu}
\affiliation{Department of Physics, Arizona State University, Tempe, AZ 85287-1504,
USA}

\author{Brigitte Servatius}
\email[Email: ]{bservat@wpi.edu}
\homepage[Web: ]{http://users.wpi.edu/~bservat}
\affiliation{Department of Mathematical Sciences, Worcester Polytechnic
Institute, 100 Institute Road, Worcester, MA 01609, USA}

\author{M. F. Thorpe}
\email[Email: ]{mft@asu.edu}
\homepage[Web: ]{http://thorpe2.la.asu.edu/thorpe}
\affiliation{Department of Physics, Arizona State University, Tempe, AZ 85287-1504,
USA and Rudolf Peierls Centre for Theoretical Physics, University of Oxford, 1
Keble Rd, Oxford OX1 3NP, England}

\begin{abstract}
Finite pieces of locally isostatic networks have a large number of floppy modes because
of missing constraints at the surface. Here we show that by imposing suitable boundary
conditions at the surface, the network can be rendered {\it{effectively
isostatic}}. We refer to these as {\it{anchored boundary conditions}}. An
important example is formed by a two-dimensional
network of corner sharing triangles, which is the focus of this paper. Another
way of rendering such networks isostatic, is by adding an external wire along
which all unpinned vertices can slide ({\it{sliding boundary conditions}}). This
approach
also allows for the incorporation of  boundaries
associated with internal {\it{holes}} and complex sample geometries, which are
illustrated with examples. The recent synthesis of bilayers of vitreous silica
has provided impetus for this work. Experimental results from the imaging of
finite pieces at the atomic level needs such boundary conditions, if the observed
structure is to be computer-refined  so that the interior atoms have the perception
of being in an infinite isostatic environment.
\end{abstract}

\maketitle

\section{Introduction} \seclab{intro}
Boundary conditions are paramount in many areas of computer modeling in science.
At the atomic level, finite samples require appropriate boundary conditions in
order that atoms in the interior behave as if they were part of a larger or
infinite sample, or as closely to this as is possible. One example of this is
the calculation of the electronic properties of covalent materials where the
surface is terminated with H atoms so that all the chemical valency is
satisfied.  In this way the HOMO (highest occupied molecular orbital) and the
LUMO (lowest unoccupied molecular orbital) states inside the sample can be
obtained that are not very different from those expected in the bulk sample.
In materials science the electronic band structure of a sample of crystalline
Si could be obtained by determining the electronic properties of a finite
cluster terminated with H bonds at the surface.  In practice this is rarely
done, as it is more convenient to use periodic boundary conditions and hence
use Bloch's theorem, but this technique has been used recently in graphene
nanoribbons \cite{HPS07}.

For most samples, the nature of the boundary, fixed, free or periodic only alters
the properties of the sample by the ratio of the number of atoms on the surface
to those in the bulk. This ratio is $N^{-\frac{1}{d}}$ where $N$ is the number
of atoms ({\it{later referred to as vertices}}) and $d$ is the dimension.  Of
course this ratio goes to zero in the thermodynamic limit as the size of the
system $N\rightarrow\infty$ and leads to the important result that properties
become independent of boundary conditions for large enough systems.

Similar statements can be made about the mechanical and vibrational properties of systems
{\it{except}} for isostatic networks  that lie on the border of mechanical instability.
In this case the boundary conditions are important no matter how large $N$,
and special care must be taken with devising boundary conditions
so that the interior atoms behave as if they were part of an infinite sample,
in as much as this is possible \cite{T95,LKMSS15,EHKTH15}.

In Figure~\ref{fig:Figure1}, we show a part of a Scanning Probe Microscope (SPM) image
\cite{LBYSHSWSF12} of a bilayer of vitreous silica which has the chemical formula SiO$_2$. The
sample consists of an upper layer of tetrahedra with all the apexes pointing downwards where
they join a mirror image in the lower layer. In the figure we show the triangular faces of the
upper tetrahedra, which form rigid triangles with a (red) Si atom at the center and the (black)
O atoms at the vertices of the triangles which are freely jointed to a good approximation. We
refer to these networks as {\it{locally isostatic}} as the number of degrees of freedom of the
equilateral triangle in two dimensions is exactly balanced by the shared pinning constraints (2
at each of the 3 vertices, so that $3 -2 \times 3/2 =0$). While the $3$D bilayers are locally
isostatic, so too are the $2$D projections of corner-sharing triangles which are the focus of
this paper. We will use the {\it{Berlin A}} sample as the example throughout
\cite{WKST13,KSWT14} so that we can focus on this single geometry for pedagogical purposes.

Because experimental samples are always finite in extent and usually have irregular boundaries,
including internal regions that are either absent, or not imaged it is necessary to develop
appropriate boundary conditions. Note that the option of cutting a rectangular piece out of the
experimental image is not available because of the amorphous nature of the network, which means
that it is not possible for the left side to connect to the right side as with a regular
crystalline network. Even if this were possible, it would be unwise to discard experimental
data and hence loose information. In this paper we show how boundary conditions can be applied
to locally isostatic systems which are not periodic.

\begin{figure}[tb]
\includegraphics[width=7.3cm,angle=0]{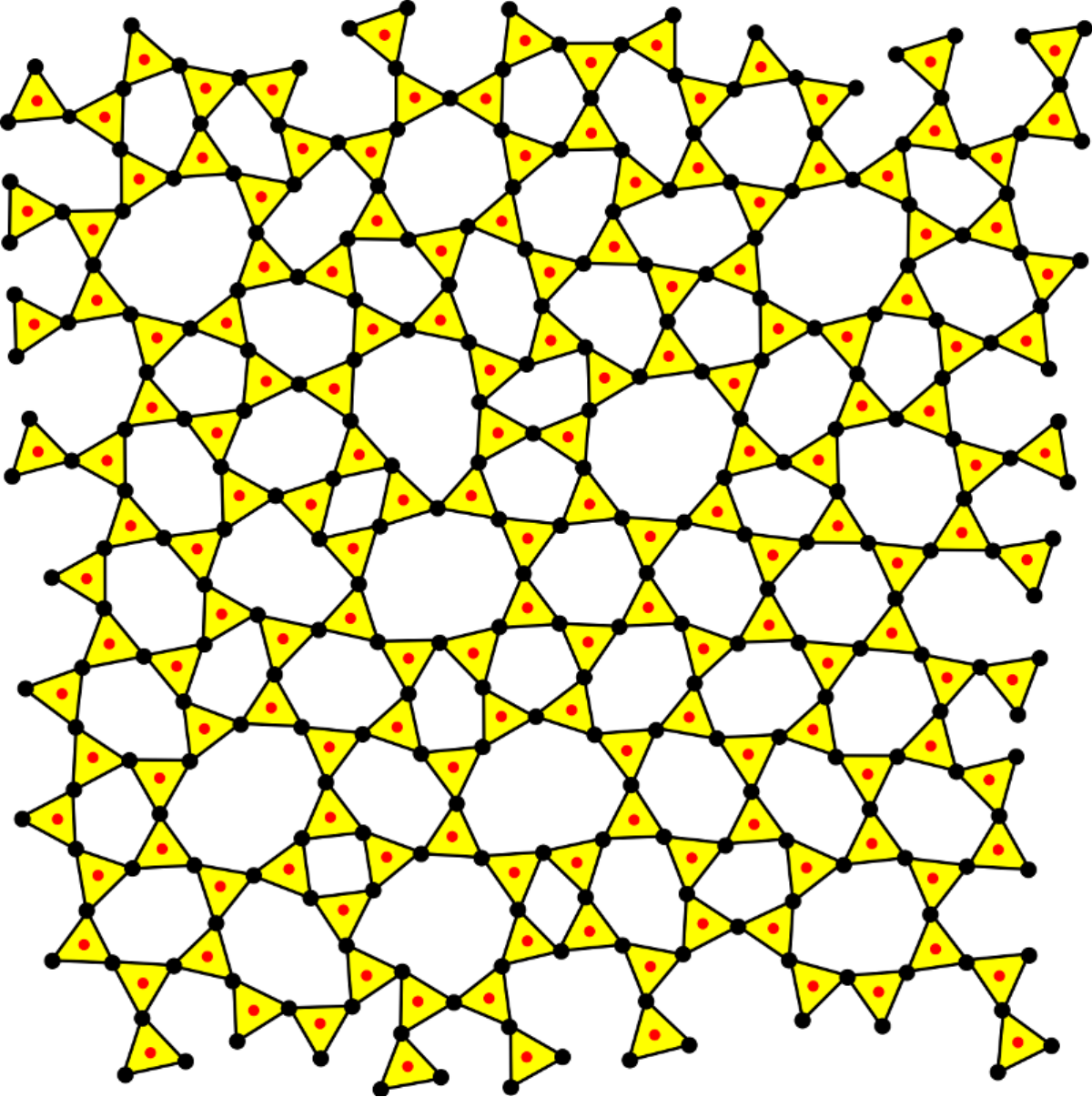}
\caption{Showing a piece of bilayer of vitreous silica imaged in SPM (Scanning Probe Microscope)
\cite{LBYSHSWSF12} to show the Si atoms as red discs and the O atoms as black discs. The local
covalent bonding leads to the yellow almost-equilateral triangles that are freely jointed, which we
will refer to as {\it{pinned}}. The triangles at the surface have either one or two vertices unpinned.}
\label{fig:Figure1}
\end{figure}

In this paper, we show rigorously that there are various ways to add back the exact
number of missing constraints at the surface, in a way that they are sufficiently
uniformly distributed around the boundary that the network is guaranteed to be
isostatic everywhere.  There is some limited freedom in the precise way these boundary
conditions are implemented, and the boundary can be general enough to include internal
holes. The proof techniques used here involves showing that \emph{all} subgraphs
have insufficient edge density for redundancy to occur \cite{L70}.  In the appendix,
we give an algorithmic desctiption of our boundary conditions and discuss
in detail how to ensure the resulting boundary is sufficiently generic.

Using the pebble game \cite{JT95,JT96}, we verified on a number of samples
that anchored boundary conditions in which alternating free vertices are
pinned results in a global isostatic state.  The pebble game is an integer
algorithm,  based on Laman's theorem \cite{L70}, which for a particular network
performs a rigid region decomposition, which involves finding  the rigid regions,
the hinges between them, and the number  of floppy (zero-frequency) modes.
We have used it to confirm that the locally isostatic samples  such as that in
this paper are isostatic overall with anchored boundary conditions. The results
of  this paper imply that, under a relatively mild connectivity hypothesis, this
procedure is provably  correct, and thus, relatively robust.  Additionally, the
necessity of running the pebble game for each individual case is avoided.

\figref{Figure2} shows \emph{sliding} boundary condition \cite{ST10}. These
make use of a different, simpler kind of geometric constraint at each unpinned
surface site.  The global effect on the network's degrees of freedom is like that
of the  anchored boundary conditions, and this setup is computationally reasonable.
At the same time, the proofs for this case are simpler, and generalize  more
easily to handle situations such as holes in the sample.

In Figure~\ref{fig:Figure3}, we show the anchored boundary conditions. We have
trimmed off the surface triangles in Figure~\ref{fig:Figure1} that are only
pinned at one vertex. This makes for a a more compact structure whose properties
are more likely to mimic those of a larger sample, and makes our mathematical statements
easier to formulate. In addition we have had to remove the 3 purple triangles at
the lower right hand side in order to get an even number of unpinned surface sites.
When the network is embedded in the plane, this is possible, except for very
degenerate samples (see \figref{Figure3}).

\begin{figure}[tb]
\includegraphics[width=7.3cm,angle=0]{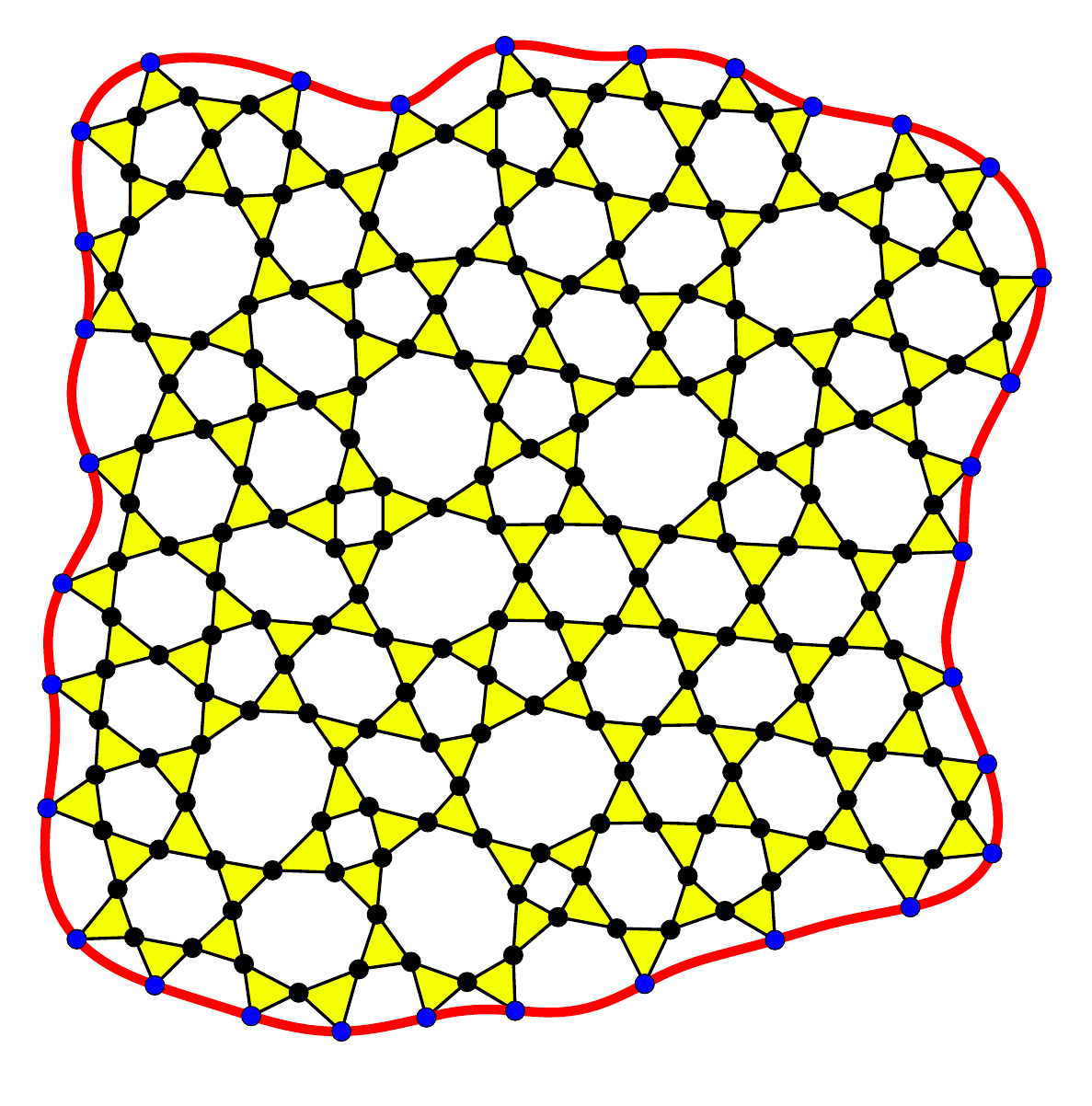}
\caption{Illustrating sliding boundary conditions, used for a piece of the sample shown
in Figure~\ref{fig:Figure1}. The boundary sites are shown as blue discs and the
3 purple triangles at the lower left Figure~\ref{fig:Figure3} have been
removed. The red Si atoms at the centers of the triangles in
Figure~\ref{fig:Figure1} have also been removed for clarity.
The boundary is formed as a smooth analytic curve by using a Fourier series with 16
sine and 16 cosines terms to match the number of surface vertices, where the
center for the radius $r(\theta)$ is placed at the centroid of the 32 boundary
vertices \cite{sadjadi2017refining}. Note that sliding boundary conditions do not require
an even number of boundary sites.}
\label{fig:Figure2}
\end{figure}

\begin{figure}[tb]
\includegraphics[width=7.3cm,angle=0]{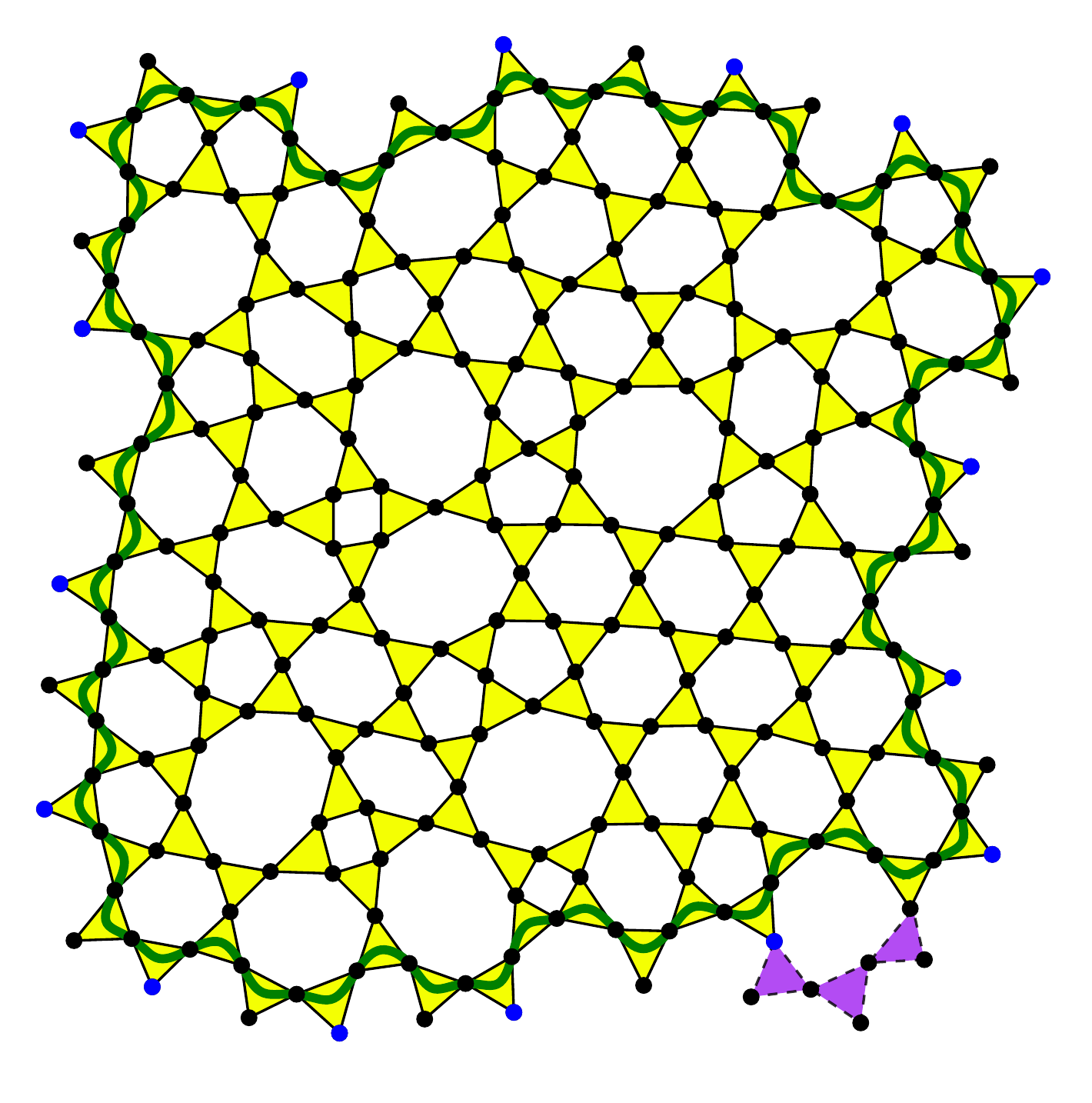}
\caption{Illustrating the anchored boundary conditions used for the sample
shown in Figure~\ref{fig:Figure1}.  The alternating anchored sites on the
boundary are shown as blue discs and the 3 purple triangles at the lower right
are removed to give an even number of unpinned surface sites. The red Si atoms
at the centers of the triangles in Figure~\ref{fig:Figure1} have been suppressed
for clarity.}
\label{fig:Figure3}
\end{figure}

\begin{figure}[tb]
\includegraphics[width=7.3cm,angle=0]{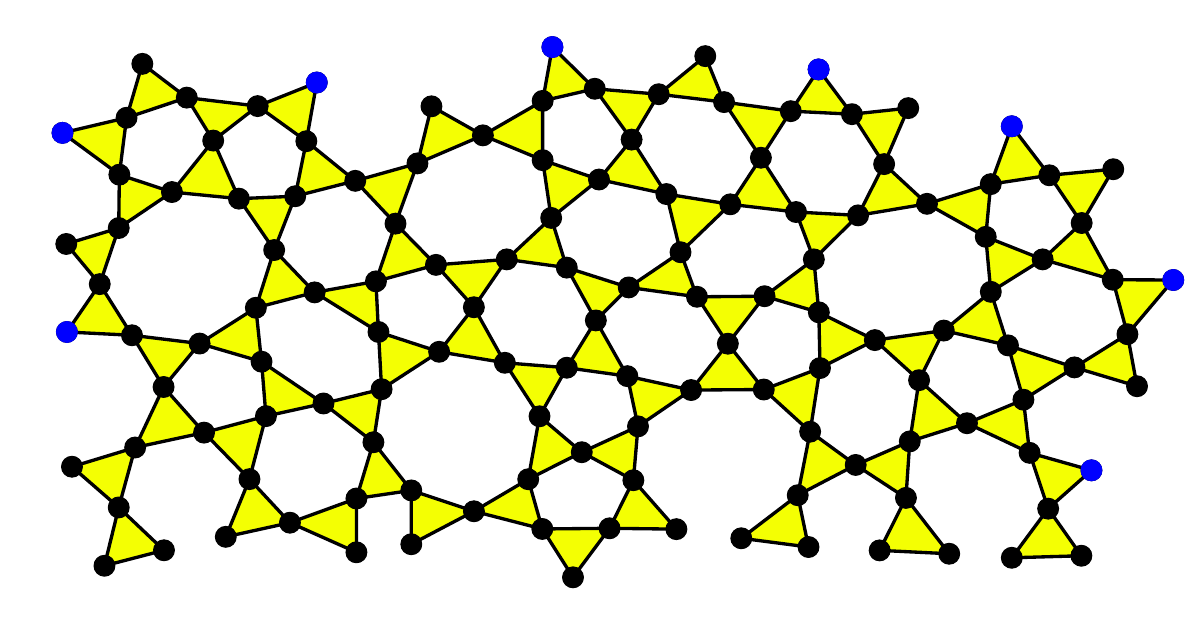}
\caption{Showing a typical subgraph from Figure \ref{fig:Figure3} used in the
proof
that there are no rigid subgraphs larger than a single triangle. (See
\lemref{rigcomps-2}.)}
\label{fig:Figure4}
\end{figure}

\begin{figure}[tb]
\includegraphics[width=7.3cm,angle=0]{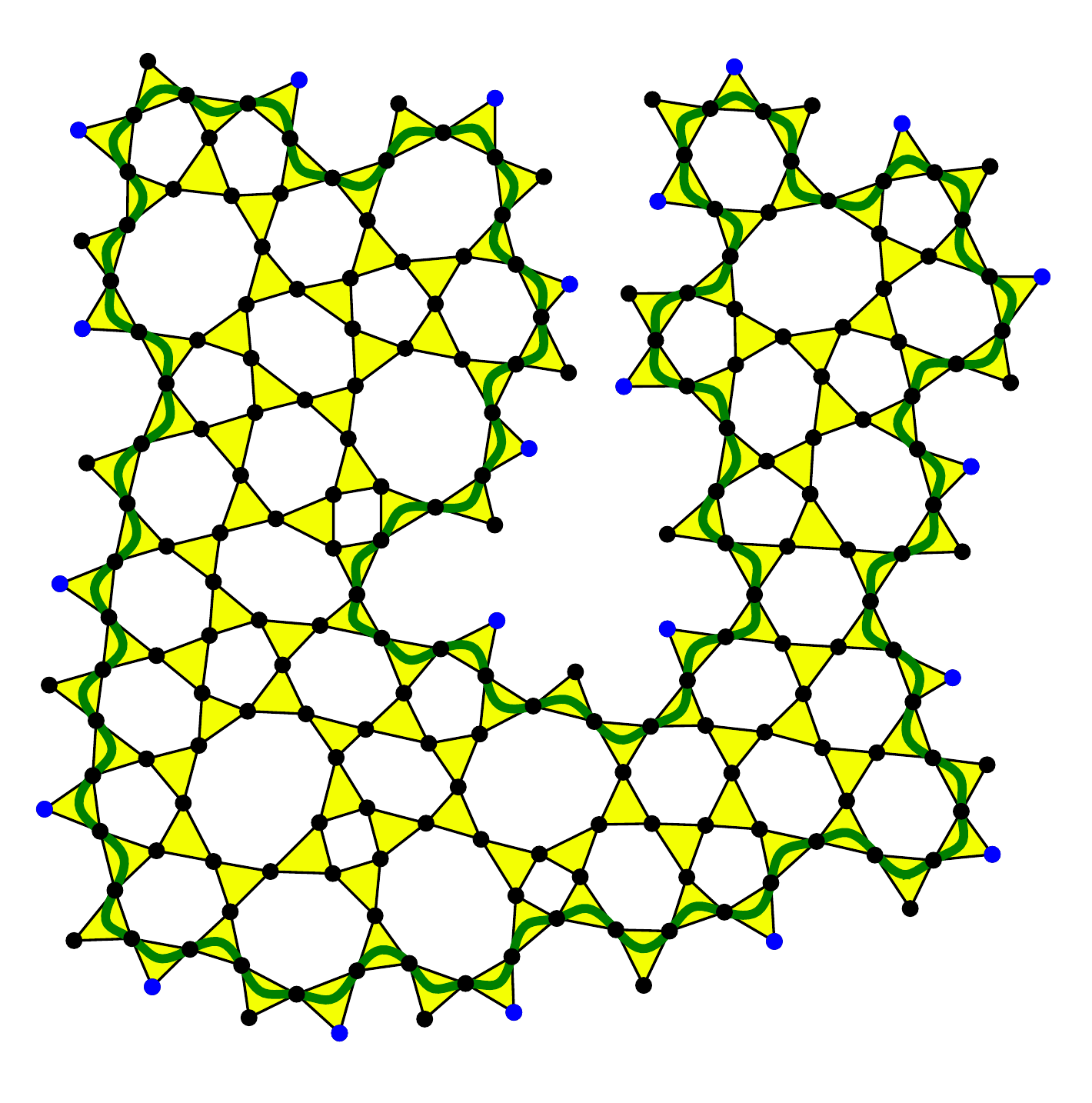}
\hfill
\includegraphics[width=7.3cm,angle=0]{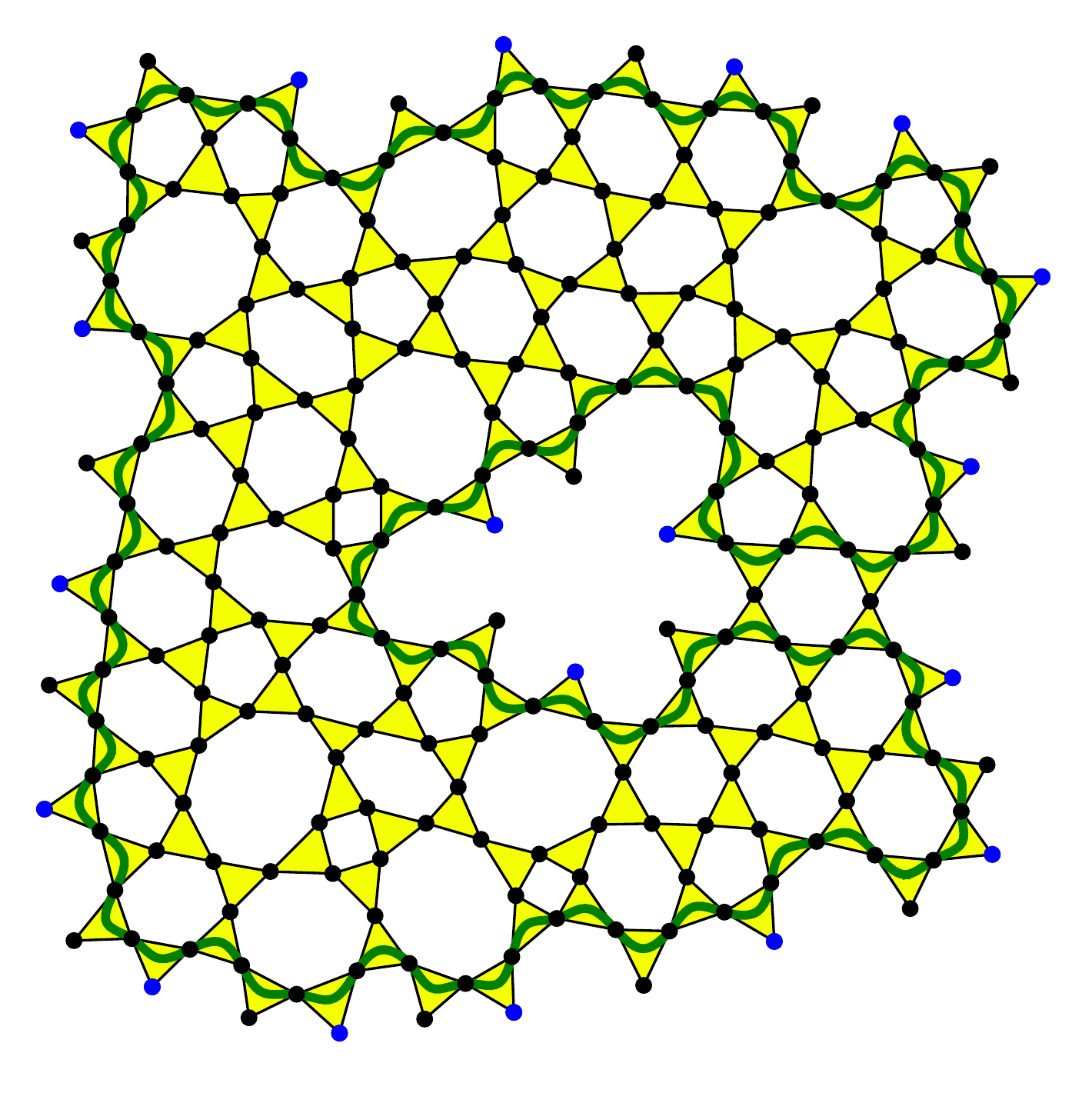}
\caption{Illustrating two, at first sight, more complex anchored boundary
conditions that by our results can be used for the sample shown in
Figure~\ref{fig:Figure2}, with the 3 purple triangles at the lower left are
removed to give an even number of unpinned surface sites.
The anchored sites are shown as blue discs, with an even number of surface sites
in both graphs. The graph at the right has an even number of surface sites in
{\it{both}} the outer and inner boundary.  The red Si atoms at the centers of
the triangles have been suppressed for clarity. The green line goes through the
boundary triangles.}
\label{fig:Figure5ab}
\end{figure}

\begin{figure}[tb]
\includegraphics[width=7.3cm,angle=0]{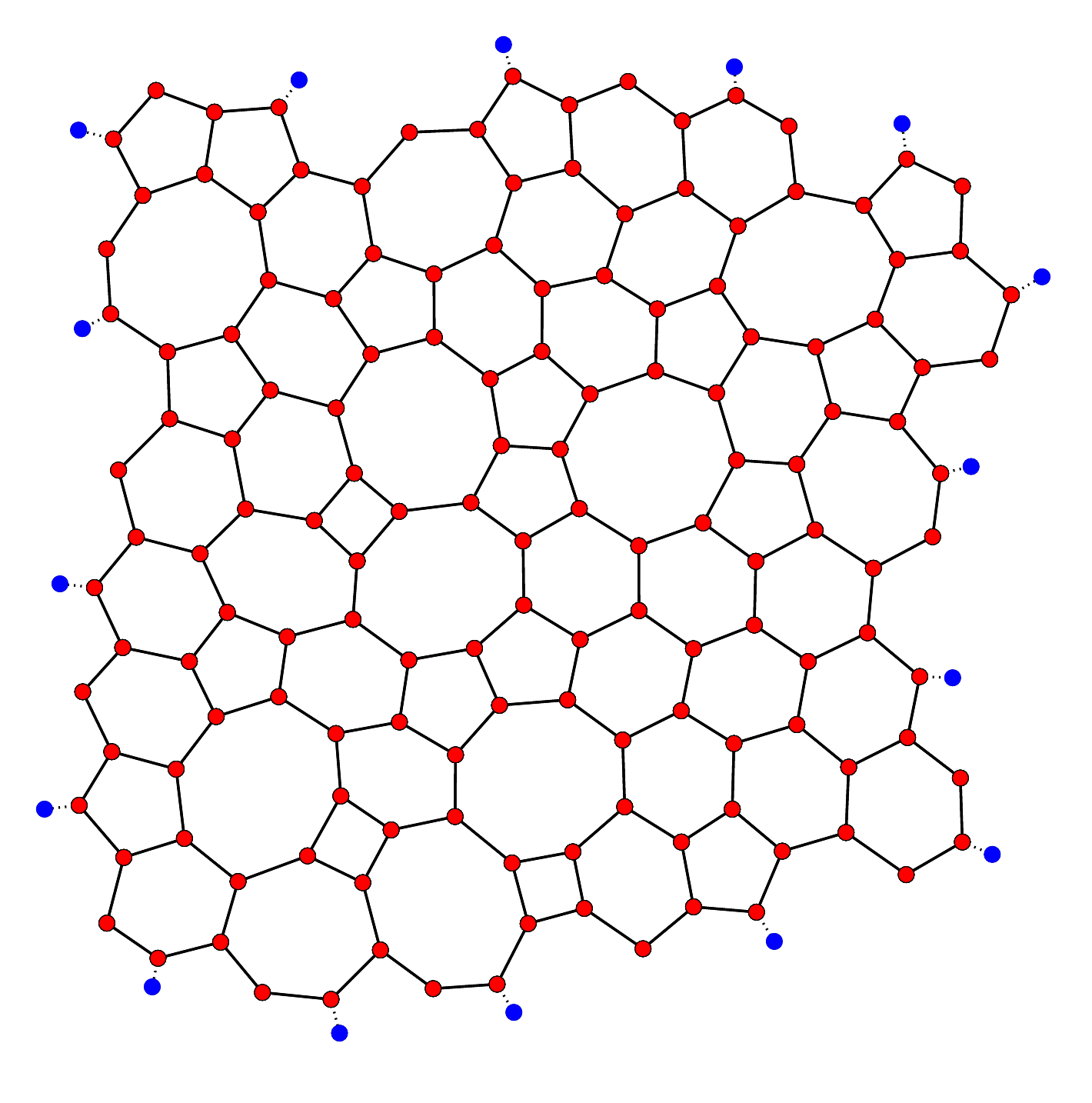}
\caption{The triangle ring network, complementary to that in
Figure~\ref{fig:Figure3}, where the Si atoms, shown as
red discs, at the center of each triangle are emphasized in this
three-coordinated network. Dashed edges are
shown connecting to the anchored sites.}
\label{fig:Figure6}
\end{figure}

\section{Combinatorial anchoring}\seclab{comb}
Intuitively, the internal degrees of freedom of systems like the ones
in Figures \ref{fig:Figure1} and \ref{fig:Figure3} correspond
to the corners of trianges that are not shared.  This is, in
essence, the content of \lemref{indep} proved below.  Proving
\lemref{indep} requires ruling out the appearance of \emph{additional}
degrees of freedom that could arise from \emph{sub-structures}
that contain more constraints than degrees of freedom.

The essential idea behind combinatorial rigidity\footnote{See, e.g., the
monograph by Graver, et al. \cite{GSS93} for an introduction.} is that \emph{generically}
all geometric constraints are visible from the topology of the structure,
as typified by Laman's \cite{L70} striking result showing the sufficiency
of Maxwell counting \cite{M64} in dimension $2$.  Genericity means,
roughly, that there is no special geometry present; in particular, generic
instances of any topology are dense in the set of all instances.

In what follows, \emph{we will be assuming genericity}, and then use
results similar to Laman's, in that they are based on an appropriate
variation of Maxwell counting.  Our proofs have a graph-theoretic
flavor, which relate certain hypotheses about connectivity\footnote{To make
this paper somewhat self-contained, we will briefly explain the concepts
we use.  Our terminology is standard, and can be found in, e.g., the
textbook by Bondy and Murty \cite{BM08}.} to hereditary Maxwell-type
counts.

\subsection{Triangle ring networks}
We will model the flexibility in the upper layer of vitrious silica bilayers
as systems of $2$D triangles, pinned togehter at the corners.  The joints at
the corners are allowed to rotate freely.  A triangle ring network is \emph{rigid}
if the only available motions preserving  triangle shapes and the network's
connectivity are rigid body motions; it is  \emph{isostatic} if it is rigid,
but ceases to be so once any joint is removed. These are an examples of body-pin
networks\footnote{Since only two triangles are pinned
together at any point, we are dealing with the $2$-dimensional specialization
of body-hinge frameworks first studied by Tay \cite{T89} and Whiteley \cite{W88}
in general dimensions.  In $2$D, there is a richer combinatorial
theory of ``body-multipin'' structures, introduced by Whiteley \cite{W89}.
See Jackson and Jordán \cite{JJ08} and the references therein for an overview
of the area.} from rigidity theory.

The combinatorial model is a graph $G$ that has one vertex for each triangle
and an edge between two triangles if they share a corner (Figure \ref{fig:Figure6}).
Since we are assuming
genericity, we will identify a geometric realization with the graph $G$ from
now on.  In what follows, we are interested in a particular class of graphs $G$,
which we call \emph{triangle ring networks}.  The definition of a triangle
ring is as follows: (a) $G$ has only vertices of degree $2$ and $3$;
$G$ is $2$-connected\footnote{This means that to disconnect $G$,
we need to remove at least $2$ vertices.}; (b) there is a simple cycle
$C$ in $G$ that contains all the degree $2$ vertices, and there are
at least $3$ degree $2$ vertices; (c) any edge cut set\footnote{This is a
inclusion-wise minimal set of edges that, when removed from $G$,
results in a graph  $2$ connected components.}
in $G$ that disconnects a subgraph containing only degree $3$
vertices has size at least $3$.

To set up some terminology, we call the degree $2$ vertices \emph{boundary
vertices} and the degrees $3$ vertices \emph{interior vertices}.  A
subgraph spanning only interior vertices is an \emph{interior subgraph}.

The reader will want to keep in mind the specific case in which $G$ is
planar with a given topological embedding and $C$ is the outer face, as
is the case in our figures. This means that subgraphs strictly interior to
the outer face have only interior vertices, which explains our terminology.
However, as we will discuss in detail later, the setup is very general.
If the sample has holes, $C$ can leave the outer boundary and return to it:
provided that it is simple, all the results here still apply.

A theorem of Tay--Whiteley \cite{W88,T89} gives the degree of
freedom counts for networks of $2$-dimensional bodies pinned together.
Generically, there are no stressed subgraphs in such a network,
with graph $G$, of $v$ bodies and $e$ pins if and only if
\begin{eqnarray}\eqlab{tw-count}
2e'\le 3v' - 3 & \qquad & \text{for all subgraphs $G'\subset G$.}
\end{eqnarray}
where $v'$ and $e'$ are the number of vertices and edges of the subgraph.
If \eqref{tw-count} holds for all subgraphs, the rigid subgraphs are
all isostatic, and they are the subgraphs where
\eqref{tw-count} holds  with equality.

\begin{lemma}\lemlab{indep}
Any triangle ring network $G$ satisfies \eqref{tw-count}.
\end{lemma}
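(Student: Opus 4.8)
The plan is to reduce the inequality \eqref{tw-count} to a clean statement about the edge boundary of a subgraph, and then dispatch the cases according to how many boundary vertices the subgraph contains. First I would make two routine reductions. If the connected pieces of a subgraph $G'$ each satisfy the bound with their own $-3$ on the right, then summing over the $k$ pieces gives $3v'-3k \le 3v'-3$, so it suffices to treat connected $G'$; and since passing to the induced subgraph on the vertex set $V(G')$ only adds edges, it suffices to treat connected \emph{induced} subgraphs $G' = G[W]$.

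The key computation is a single degree count. Writing $b'$ for the number of boundary (degree $2$) vertices of $W$, $i'$ for the number of interior (degree $3$) vertices, so that $v' = b' + i'$, and letting $\partial$ denote the number of edges of $G$ with exactly one endpoint in $W$, summing the $G$-degrees over $W$ yields $2b' + 3i' = 2e' + \partial$, hence $2e' = 3v' - b' - \partial$. Thus \eqref{tw-count} is \emph{equivalent} to the single inequality $b' + \partial \ge 3$, and the whole argument reduces to verifying this for every connected induced subgraph.

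I would then split on $b'$. When $b' \ge 3$ there is nothing to prove. When $b' \in \{1,2\}$, the hypothesis that $G$ has at least three boundary vertices forces $W \ne V(G)$, so $W$ is a proper nonempty vertex set; since $G$ is $2$-connected, and hence $2$-edge-connected, we get $\partial \ge 2$, which already gives $b' + \partial \ge 3$. The remaining, and genuinely delicate, case is $b' = 0$, where $G'$ is an interior subgraph and I need $\partial \ge 3$. Two-edge-connectivity gives $\partial \ge 2$, and to exclude $\partial = 2$ I would invoke condition (c) of the definition of a triangle ring network.

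The subtlety to get right in the interior case is that the two crossing edges form a genuine \emph{minimal} edge cut: removing either one alone leaves $G$ connected by $2$-edge-connectivity, so a size-$2$ edge boundary is automatically a bond, and deleting both isolates exactly the all-interior subgraph $G[W]$ as one of the two resulting components. This is precisely the configuration that condition (c) forbids, so $\partial \ne 2$ and therefore $\partial \ge 3$. I expect this interior case---and in particular the bookkeeping that the edge boundary of $G[W]$ is a legitimate minimal cut separating an all-interior component---to be the only place where real work is needed; once the identity $2e' = 3v' - b' - \partial$ is in hand, the other cases follow immediately from connectivity.
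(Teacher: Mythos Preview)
Your proof is correct and rests on the same degree-sum identity and the same three structural hypotheses (at least three boundary vertices, $2$-connectivity, and condition (c)) as the paper's argument. The paper proceeds by contradiction, peeling degree-$1$ vertices and then observing that a violating subgraph has at most two vertices of degree $2$; your formulation via the identity $2e' = 3v' - b' - \partial$, reducing everything to $b' + \partial \ge 3$, is a slightly cleaner direct packaging of the same count, but the case analysis and the way each hypothesis is invoked are essentially identical.
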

\begin{proof}
Suppose the contrary.  Then there is a vertex-induced subgraph $T$ on $v'$
vertices
that violates \eqref{tw-count}. If $T$ contains a vertex $v$ of degree 1 then
$T-v$
also violates \eqref{tw-count} so we may assume that $T$ has minimum degree 2.
In this case, $T$ has at most $2$ vertices of degree $2$, since it has
maximum degree $3$. In particular, $T$ may be disconnected from $G$ by removing
at most $2$ edges.  If $T$ is an interior subgraph, we get a contradiction right
away.
Alternatively, at least one of the degree $2$ vertices in $T$ is degree $2$ in
$G$, and so on $C$.  If exactly one is, then $G$ is not $2$-connected.  If both
are,
then $T=G$ and there are only $2$ boundary vertices.  Either case is a
contradiction.
\end{proof}

\begin{corollary}\corlab{rigcomps}
The rigid subgraphs of a triangle ring network $G$ are the
subgraphs containing exactly $3$ vertices of degree $2$ and every other
vertex has degree $3$. Moreover, any proper rigid subgraph contains at most
one boundary vertex of $G$.
\end{corollary}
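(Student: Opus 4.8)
The plan is to read everything off the Tay--Whiteley count \eqref{tw-count}, which by \lemref{indep} holds throughout $G$ with no stresses. By the Tay--Whiteley characterization quoted above, the rigid subgraphs are then exactly the subgraphs $G'$ attaining equality $2e' = 3v' - 3$. I would first note that such a $G'$ is necessarily vertex-induced: adding any further edge of $G$ on $V(G')$ would make $2e' > 3v'-3$, violating \eqref{tw-count}. This lets me treat a proper rigid subgraph as a proper vertex subset and compute edge cuts cleanly.

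First I would convert the equality into a statement about the degree sequence of $G'$. Writing $n_d$ for the number of vertices of degree $d$ in $G'$ and using that $G$ (hence $G'$) has maximum degree $3$, the handshake identity turns $2e' = 3v' - 3$ into
\[
3n_0 + 2n_1 + n_2 = 3.
\]
Next I would show $G'$ has minimum degree at least $2$ once $v' \ge 2$: if some $u \in G'$ had $\deg_{G'}(u) \le 1$, then $G' - u$ has $v'-1$ vertices and loses at most one edge, and a one-line computation gives $2e'' - (3v''-3) = 3 - 2\deg_{G'}(u) > 0$, so $G'-u$ violates \eqref{tw-count}, contradicting \lemref{indep}. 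With $n_0 = n_1 = 0$, the displayed identity forces $n_2 = 3$ and all remaining vertices to have degree $3$, which is the claimed characterization (the single-triangle subgraph, $v'=1$, being the trivial rigid base case).

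For the ``moreover'' statement, the key step is to compute the size of the edge cut separating $G'$ from the rest of $G$. Because $G'$ is vertex-induced and has minimum degree $2$, a boundary vertex of $G$ lying in $G'$ has both of its $G$-edges inside $G'$ and so contributes nothing to the cut; only a degree-$2$ (in $G'$) vertex that is \emph{interior} in $G$ contributes, and it contributes exactly one crossing edge. Letting $b$ be the number of boundary vertices of $G$ in $G'$, the three degree-$2$ vertices of $G'$ split into $b$ boundary and $3-b$ interior ones, so the cut has $3-b$ edges; equivalently, summing $\deg_G - \deg_{G'}$ over $V(G')$ gives $\bigl(2b + 3(v'-b)\bigr) - (3v'-3) = 3 - b$.

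Finally, if $G'$ is proper then $V(G) \setminus V(G')$ is nonempty, so these $3-b$ edges form a genuine disconnecting set. Since $G$ is $2$-connected it is $2$-edge-connected (Whitney: $\lambda(G) \ge \kappa(G) \ge 2$), so every disconnecting edge set has size at least $2$; hence $3 - b \ge 2$, i.e.\ $b \le 1$. I expect the only real subtlety to be the bookkeeping in the cut-size computation---making sure each crossing edge is charged to exactly one degree-$2$ interior vertex, and that boundary vertices of $G$ in $G'$ are indeed among the three degree-$2$ vertices---after which the connectivity hypothesis finishes the argument.
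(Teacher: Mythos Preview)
Your argument is correct. For the first statement you spell out what the paper leaves as ``straightforward,'' deriving $3n_0 + 2n_1 + n_2 = 3$ from handshaking and then using \lemref{indep} to eliminate vertices of degree $\le 1$ and force $n_2 = 3$. For the second statement you take a slightly different but essentially equivalent route from the paper. The paper argues directly via $2$-(vertex)-connectivity: if a proper rigid $T$ contains two boundary vertices of $G$, then all edges leaving $T$ are incident on the single remaining degree-$2$ vertex of $T$, which is therefore a cut vertex of $G$. You instead compute the size of the edge cut as $3 - b$ and appeal to $2$-edge-connectivity via $\lambda(G) \ge \kappa(G) \ge 2$. Both arguments rest on the same structural observation---only the non-boundary degree-$2$ vertices of $T$ carry edges out of $T$---so the distinction is cosmetic; your version does a bit more bookkeeping but has the advantage of making the cut size explicit.
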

\begin{proof}
The first statement is straightforward.  The second follows from observing that
if a rigid subgraph $T$ has two vertices on the boundary of $G$, then $G$
cannot be $2$-connected, since all the edges detaching $T$ from $G$ are
incident on a single vertex.
\end{proof}
When $G$ is planar, these rigid subgraphs are regions cut out by cycles of
length $3$
in the Poincaré dual.  More generally in the planar case, subgraphs
corresponding to regions that are smaller triangle ring networks
with $t$ degree $2$ vertices have $t$ degrees of freedom.

\subsection{Anchoring with sliders}
Now we can consider our first anchoring model, which uses
\emph{slider pinning} \cite{ST10}.
A \emph{slider} constrains the motion of a point to
remain on a fixed line, rigidly attached to the plane.
When we talk about attaching sliders to a vertex of the graph,
we choose a point on the corresponding triangle, and constrain its motion by
the slider.  In the results used below, this point should be chosen generically;
for example the theory does not apply if the slider is attached at a pinned
corner shared by two of the triangles.  Since we are only attaching sliders
to triangles corresponding to degree $2$ vertices in $G$, we may always attach
sliders at an unpinned triangle corner.

The notion of rigidity for networks of bodies with sliders is that of
being \emph{pinned}: the system is completely immobilized.\footnote{Rigid body
motions are
not ``trivial'', because slider constraints are not preserved by them.}  A
network with sliders is \emph{pinned-isostatic} if it is pinned, but ceases
to be so if any pin or slider is removed.

The equivalent of the White-Whiteley counts in the presence of sliders is a
theorem of Katoh and Tanigawa \cite{KT13}, which says that a generic
slider-pinned
body-pin network $G$ is independent if and only if the body-pin graph
satisfies \eqref{tw-count} and
\begin{eqnarray}\eqlab{pin-count}
2e' + s' \le 3v' & \qquad & \text{for all subgraphs $G'\subset G$,}
\end{eqnarray}
where $s'$ is the number of sliders on vertices of  $G'$.  Here is our first
anchoring procedure.
\begin{theorem}\theolab{slider-pinning}
Adding one slider to each degree $2$ boundary vertex of a triangle ring network
$G$ gives a pinned-isostatic network.
\end{theorem}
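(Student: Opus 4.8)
The plan is to invoke the Katoh--Tanigawa theorem, which reduces the claim to checking the two hereditary counts \eqref{tw-count} and \eqref{pin-count} for independence, together with a single global count to certify that the network is actually pinned. One of these is already in hand: \eqref{tw-count} holds for every subgraph of a triangle ring network by \lemref{indep}. So the real work is to establish \eqref{pin-count} for all subgraphs and then to verify the global equality $2e + s = 3v$; the former gives independence, and the latter upgrades independence to pinned-isostaticity.

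For \eqref{pin-count} I would argue by a direct degree count that exploits the fact that a slider is placed on each boundary vertex and that boundary vertices are exactly the degree $2$ vertices, while interior vertices have degree $3$. Fix any subgraph $G'$ with $v'$ vertices, $e'$ edges, and $s'$ sliders, and write $b'$ for the number of boundary vertices of $G$ lying in $G'$, so that $s' = b'$. Since every vertex retains at most its $G$-degree inside $G'$,
\[
2e' = \sum_{u \in G'} d_{G'}(u) \;\le\; \sum_{u \in G'} d_G(u) = 2b' + 3(v' - b') = 3v' - b'.
\]
Adding $s' = b'$ to both sides yields $2e' + s' \le 3v'$, which is precisely \eqref{pin-count}. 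The conceptual point is that each boundary vertex falls exactly one short of the maximum degree $3$ and carries exactly one slider, so the slider absorbs the degree deficit term by term.

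To promote independence to pinned-isostaticity I would check the global count. Letting $n_2$ and $n_3$ denote the numbers of boundary and interior vertices, the handshake lemma gives $2e = \sum_u d_G(u) = 2n_2 + 3n_3$, while placing one slider per boundary vertex gives $s = n_2$; hence $2e + s = 3n_2 + 3n_3 = 3v$. By Katoh--Tanigawa the full constraint system is independent, so its rank equals $2e + s = 3v$, the total number of degrees of freedom of the $v$ bodies. Thus the network is pinned, and since there is no redundancy, deleting any single pin or slider drops the rank below $3v$ and unpins it. This is exactly pinned-isostaticity.

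The two counting identities are routine; the only genuine content is the observation underlying \eqref{pin-count}, namely that one slider per degree $2$ vertex is precisely the amount needed to keep the Katoh--Tanigawa count valid \emph{hereditarily} while making it tight globally. I do not anticipate a serious obstacle here: once the attachment point of each slider is chosen generically at an unpinned corner, as the surrounding discussion requires so that the Katoh--Tanigawa theorem applies, both inequalities are forced by the degree bounds built into the definition of a triangle ring network, and \lemref{indep} supplies the remaining count for free.
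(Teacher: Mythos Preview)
Your proposal is correct and follows essentially the same approach as the paper: invoke Katoh--Tanigawa, cite \lemref{indep} for \eqref{tw-count}, and then observe that sliders land only on vertices of $G$-degree $2$, which forces \eqref{pin-count} via a degree-sum bound and makes the global count $2e+s=3v$ exact. The paper's proof is terser---it asserts that \eqref{pin-count} follows from the placement of sliders without writing out the handshake computation---but your explicit degree-sum inequality $2e' \le 3v' - b'$ is precisely the content of that assertion, so there is no substantive difference.
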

\begin{proof}
Let $T$ be an arbitrary subgraph with $v'$ vertices and $v''$ vertices
of degree at most $2$. That \eqref{tw-count} holds is \lemref{indep}.
The fact that the only vertices of $T$ which get a slider are vertices with
degree 2 in $G$ implies that \eqref{pin-count} is also satisfied, and, by
construction $2e + s = 3v$.
\end{proof}
We may think of this anchoring as rigidly attaching a rigid
wire to the plane then constraining the boundary vertices to
move on it.  Provided that the wire's path is smooth and
sufficiently non-degenerate, this is equivalent, for
analyzing infinitesimal motions,
to putting the sliders in the direction of the
tangent vector at each boundary vertex. See also
Figure~\ref{fig:Figure2}.

\begin{figure}[tb]
\includegraphics[width=7.3cm,angle=0]{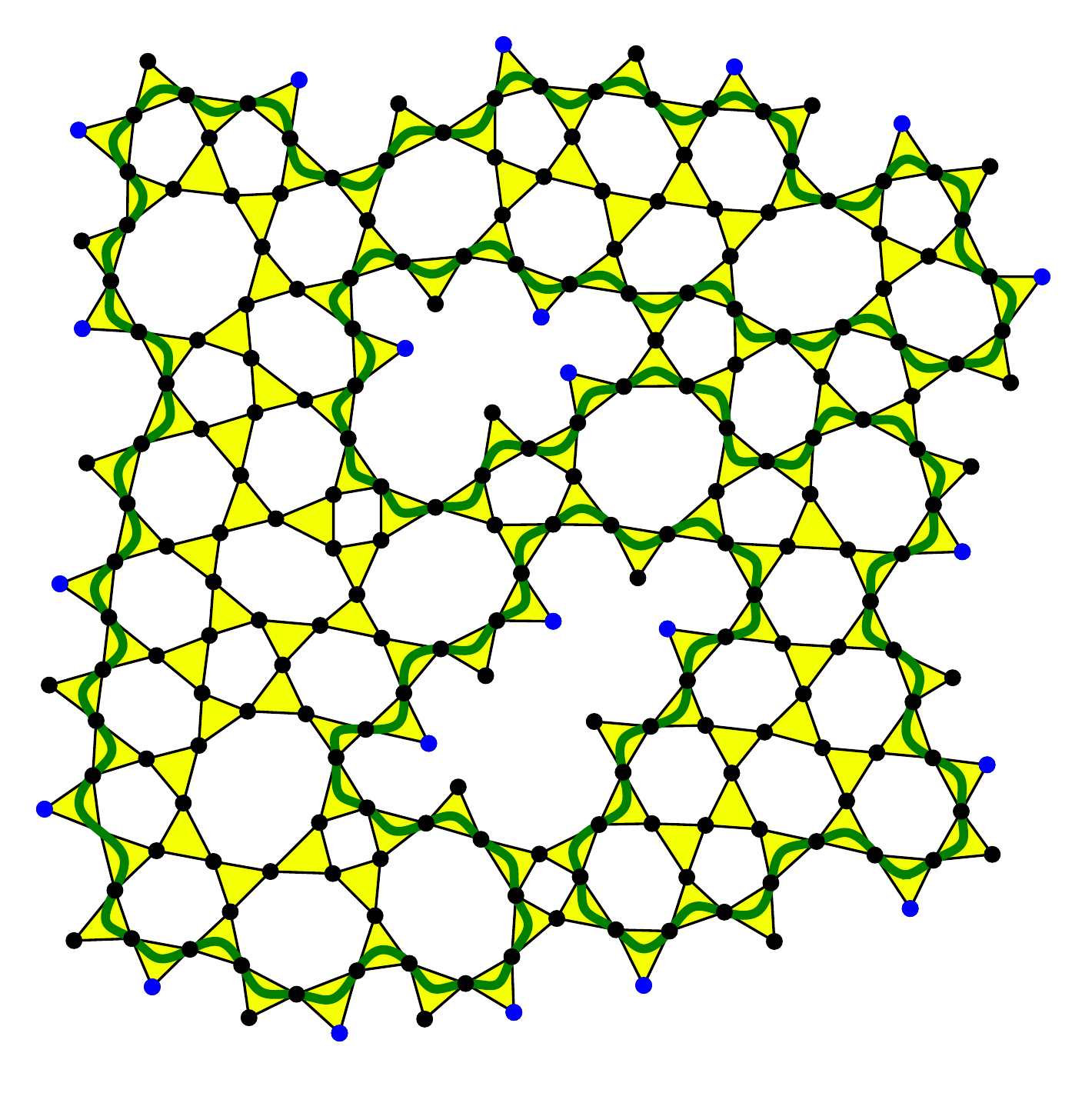}
\caption{Illustrating even more complex boundaries, developed from the sample
shown in Figure~\ref{fig:Figure3} by removing triangles to form two internal
{\it{holes}}. The boundary sites are shown as blue discs and the 3 purple
triangles at the lower left Figure~\ref{fig:Figure3} have been removed. The
red Si atoms at the centers of the triangles in Figure~\ref{fig:Figure1} have
also been removed for clarity. The green line forms a continuous
{\it{boundary}} which goes through all the surface sites which must be an even
number.  The anchored (blue) sites then alternate with the unpinned sites on
the green boundary curve which has to cross the bulk sample in two places to
reach the two internal holes.  Here there are 32 boundary sites, 5 boundary
sites in the upper hole and 7 in the lower hole, giving a total even number of
44 boundary sites. Where these crossings take place is arbitrary, but it is
important that the anchored and unpinned surface sites alternate along whatever
(green) boundary line is drawn.}
\label{fig:Figure7}
\end{figure}

\subsection{Anchoring with immobilized triangle corners}
Next, we consider anchoring $G$ by immobilizing (pinning) some points completely.
Combinatorially, we model pinning a triangle's corner by adding two sliders
through it.  Since we are still using sliders, the definitions of
pinned and pinned-isostatic are the same as in the previous section.

The analogue for \eqref{pin-count} when we add sliders in groups of $2$ is:
\begin{eqnarray}\eqlab{thumbtack-count}
2e' + 2s' \le 3v' & \qquad & \text{for all subgraphs $G'\subset G$,}
\end{eqnarray}
where $s'$ is the number of immobilized corners.
\begin{theorem}\theolab{thumbtack-pinning}
Let $G$ be a triangle ring network with an even number $t$ of
degree $2$ vertices on $C$.
Then, following $C$ in cyclic order, pinning every other boundary
vertex that is encountered results in a pinned-isostatic network.
\end{theorem}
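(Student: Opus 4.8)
The plan is to verify the Katoh--Tanigawa conditions for the pinned network obtained by completely immobilizing the $t/2$ selected boundary corners, so that pinned-isostaticity follows. Condition \eqref{tw-count} is already supplied by \lemref{indep}, so the substance is to check \eqref{thumbtack-count}, namely $2e' + 2s' \le 3v'$, for every subgraph, and then to observe that the global count is tight. Since adding edges or extra sliders on a fixed vertex set only makes \eqref{thumbtack-count} harder to satisfy, I would restrict attention to vertex-induced subgraphs $T$ carrying all of their incident pins; for such a $T$ the quantity $s'$ equals the number $p_T$ of pinned boundary vertices lying in $T$. For the global count, writing $v$ for the number of triangles and using that $G$ has $t$ vertices of degree $2$ and the remaining $v-t$ of degree $3$, a handshake gives $2e = 3v - t$, so with $s = t/2$ immobilized corners one obtains $2e + 2s = 3v$ exactly; this tightness is what upgrades independence to the pinned-isostatic conclusion.

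The heart of the argument is a deficiency identity. For an induced subgraph $T$ on $v'$ vertices, let $b_T$ be the number of its degree-$2$ (boundary) vertices and let $\partial T$ denote the number of edges of $G$ with exactly one endpoint in $T$. Summing $3 - d_T(u)$ over $u \in T$ and comparing each vertex's degree in $T$ with its degree in $G$ (which is $2$ on the boundary and $3$ in the interior) yields
\[ 3v' - 2e' = b_T + \partial T. \]
Splitting $b_T = p_T + q_T$ into pinned and unpinned boundary vertices, the target bound $2e' + 2p_T \le 3v'$ becomes the purely combinatorial inequality $p_T \le q_T + \partial T$: the number of pinned boundary vertices captured by $T$ may not exceed the number of unpinned ones plus the size of the edge boundary.

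To prove this inequality I would analyze how $T$ meets the cycle $C$. If $T$ avoids $C$ entirely, then $p_T = q_T = 0$ and there is nothing to check; if $T$ contains all of $C$, then $p_T = q_T = t/2$ and the inequality is immediate. In the remaining case $T \cap V(C)$ decomposes into $k \ge 1$ maximal arcs, and because $C$ is a cycle this forces exactly $2k$ of its edges to leave $T$, so $\partial T \ge 2k$. Within each arc the degree-$2$ vertices inherit the global alternation, so along any single arc the pinned vertices can outnumber the unpinned ones by at most one; summing over the arcs gives $p_T - q_T \le k \le 2k \le \partial T$, as required. The main obstacle is exactly this last bookkeeping step: one must see that the only way a subgraph can over-represent pinned corners is to sever correspondingly many cycle edges. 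The even parity of $t$ enters not in the per-arc estimate but in closing the alternation consistently around all of $C$, which keeps the full-boundary subgraph balanced and makes the global count $2e+2s=3v$ exact rather than over-pinned. Once \eqref{thumbtack-count} and this tight global count are in hand, the Katoh--Tanigawa criterion delivers pinned-isostaticity.
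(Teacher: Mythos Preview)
Your argument is correct. The deficiency identity $3v'-2e'=b_T+\partial T$ for induced $T$ is a clean way to reduce \eqref{thumbtack-count} to the combinatorial inequality $p_T\le q_T+\partial T$, and your arc decomposition of $T\cap V(C)$ proves that inequality: each of the $k$ arcs contributes at most one to $p_T-q_T$ by the alternating pattern, while it contributes two edges of $C$ to $\partial T$. Together with the global count $2e+2s=3v$ and \lemref{indep}, this gives pinned-isostaticity.

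The paper reaches the same conclusion by a slightly different route. Instead of a counting identity, it produces for each pinned vertex of $T$ a distinct unpinned vertex of degree at most $2$ in $T$ by walking along $C$ from one pinned vertex to the next and observing that the walk either meets an unpinned boundary vertex inside $T$ or must exit $T$ through a low-degree vertex. This matching argument directly gives $2e'\le 3v'-2s'$, but it requires a separate case analysis for pinned vertices of degree $1$ in $T$. Your formulation avoids that case split by absorbing all low-degree contributions into $\partial T$ and $b_T$ at once, at the modest cost of first reducing to induced subgraphs. Both proofs rest on the same structural point---that over-representing pinned corners forces the subgraph to sever cycle edges---but yours makes the role of the edge boundary $\partial T$ explicit and is arguably more streamlined.
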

\begin{proof}
Let $T$ be an arbitrary subgraph of $G$.  If at most one of the vertices of $T$
are pinned, there is nothing to do.  For the moment, suppose that no
vertex of degree $1$ in $T$ is pinned.  Let $t$ be the number of pinned
vertices in $T$.

We will show that for each of the $t$ pinned vertices, there is a distinct
unpinned vertex of degree $1$ or $2$ in $t$. This implies that $2e' \le 3v' - 2t$ in
$T$, at which point we know \eqref{thumbtack-count} holds for $T$.

To prove the claim, let $v$ be a pinned vertex of $T$.  Traverse the boundary
cycle $C$ from $v$.  Let $w$ be the next pinned vertex of $T$ that is encountered.
If the chain from $v$ to $w$ along $C$ is in $T$, the
alternating pattern provides an unpinned degree $2$ vertex that is degree $2$ in $G$.
Otherwise, this path leaves $T$, which can only happen at a vertex with degree $1$ or $2$ in
$T$.  Continuing the process until we return to $v$, produces at least $t$
distinct unpinned degree $2$ vertices, since each step considers a
disjoint set of vertices of $C$.

Now assume that $T$ does have a pinned vertex $v$ of degree $1$.  The theorem will
follow if \eqref{thumbtack-count} holds strictly for $T-v$.  Let $w$ and $x$ be the
pinned vertices in $T$ immediately preceding and following $v$.  The argument
above shows that there are at least $2$ unpinned degree $1$ or $2$ vertices
in $T$ on the path in $C$ between $w$ and $x$ on $C$.  Since these are in $T-v$,
we are done.
\end{proof}

When there are an odd number of boundary vertices in $G$, \theoref{thumbtack-pinning}
does not apply.  This next lemma gives a simple reduction in many cases of
interest.
\begin{theorem}\theolab{odd}
Let $G$ be a planar triangle ring network, with $C$ the outer face.
Suppose that there are an odd number $t$ of boundary vertices.
If $G$ is not a single cycle, then it is possible to obtain a
network with an even number of boundary vertices by removing the
intersection of a facial cycle of $G$ with $C$, unless $G=C$.
\end{theorem}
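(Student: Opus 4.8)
The plan is to reduce the statement to a parity count on the outer cycle $C$. I would first record the local structure: the degree-$3$ vertices lying on $C$ cut it into arcs, the boundary vertices are exactly the degree-$2$ vertices interior to these arcs, and each arc is bordered on its inner side by a single inner face. For an inner face $F$ I would set $b(F)$ to be the number of edges it shares with $C$ and $\beta(F)$ the number of connected arcs comprising $F\cap C$. The computation that drives everything is the effect of deleting $F\cap C$ (its edges, together with the degree-$2$ vertices thereby isolated): this removes the $b(F)-\beta(F)$ arc-interior boundary vertices and converts the $2\beta(F)$ arc-endpoints from degree $3$ to degree $2$, so the number $t$ of boundary vertices changes by $-b(F)+3\beta(F)$, which has the same parity as $b(F)+\beta(F)$.

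To locate a facial cycle of the right parity I would then sum over the inner faces $F$. Each edge of $C$ borders exactly one inner face, so $\sum_F b(F)=|E(C)|=t+k$, where $k$ is the number of degree-$3$ vertices on $C$; and since the arcs are in bijection with these degree-$3$ vertices, $\sum_F \beta(F)=k$. Therefore
\begin{equation*}
\sum_F\bigl(b(F)+\beta(F)\bigr)=t+2k\equiv t\equiv 1\pmod{2},
\end{equation*}
so some inner face $F^{*}$ has $b(F^{*})+\beta(F^{*})$ odd. As $b(F^{*})+\beta(F^{*})\neq 0$, this $F^{*}$ genuinely meets $C$, and deleting $F^{*}\cap C$ flips the parity of $t$ to even; this already exhibits the facial cycle promised by the statement.

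The remaining and, I expect, most delicate task is to certify that the reduced graph $G'$ is again a planar triangle ring network, so that the reduction is usable and \theoref{thumbtack-pinning} applies. The degree bound is immediate, since every surviving affected vertex loses exactly one edge and so keeps degree $2$ or $3$, and condition~(c) is inherited for free: a degree-$2$ vertex of $C$ has both its edges on $C$ and is never adjacent to an interior vertex, so deleting boundary vertices changes neither the interior subgraphs nor their edge-cuts. The real work is connectivity together with the ``at least three boundary vertices'' requirement, and here the choice of $F^{*}$ must be refined. If $F^{*}$ meets $C$ in a single arc ($\beta(F^{*})=1$) the deletion merely replaces that arc by the interior part of $\partial F^{*}$, keeping the outer boundary a single cycle and preserving $2$-connectivity (the only failure being $G=C$, the excluded case); but a face meeting $C$ in two arcs sits across a $2$-edge-cut, and deleting $F^{*}\cap C$ would sever it and disconnect $G'$. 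Thus the crux is to show that one may always take $F^{*}$ to be a single-arc face with $b(F^{*})$ even, equivalently an arc carrying an odd number of boundary vertices whose bordering face is single-arc, and that at least three boundary vertices survive. I would prove the former by combining the parity count above with $2$-connectivity and condition~(c) to rule out the situation in which every odd arc borders a multi-arc face, and I would isolate the residual configuration, where a single arc carries all of $t$ (forcing only two boundary vertices to remain), as the ``very degenerate'' exception already signalled in the statement.
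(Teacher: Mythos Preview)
Your first two paragraphs already establish the theorem as stated: the claim is only that \emph{some} facial cycle has the property that removing its intersection with $C$ flips the parity of the boundary-vertex count, and your parity-sum argument delivers exactly this.

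The paper's proof sketch takes a different and shorter route. It asserts, as a consequence of the connectivity hypotheses together with planarity, that for every inner facial cycle $D$ the intersection $D\cap C$ is a single chain---in your notation, $\beta(F)=1$ whenever $b(F)>0$. Granting this, the $t$ boundary vertices are partitioned among the chain interiors, so by parity some chain carries an odd number of them, and removing that chain changes the parity. Your argument never invokes the single-chain property: you compute the parity change $-b(F)+3\beta(F)$ for arbitrary $\beta(F)$ and sum over faces. This is more robust, since it does not depend on whether the single-chain claim really follows from the stated axioms; conversely, when the paper's claim is available it immediately gives $\beta(F^{*})=1$ for the chosen face, which is precisely the refinement you labour toward in your final paragraph.

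That final paragraph goes beyond the theorem. The statement promises only ``a network with an even number of boundary vertices'', not that the result is again a triangle ring network, and the paper's sketch does not address that point either. Your instinct that the stronger conclusion is what one actually needs for the reduction to \theoref{thumbtack-pinning} is sound, but it is a separate and, as you rightly flag, more delicate matter than what the theorem asserts.
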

\begin{proof}[Proof sketch]
The connectivity requirements for a triangle ring network,
combined with planarity of $G$ imply that the intersection of $C$ and
any facial cycle $D$ of $G$ is a single chain.  Every boundary vertex is in the
interior of such a chain, so some facial cycle $D$ contributes an odd number of
boundary vertices.  Removing the edges in $D\cap C$ changes the parity of the
number of boundary vertices.
\end{proof}

\subsection{Anchoring with additional bars}
So far, we have worked with networks of triangles pinned together.  Now we augment
the model to also include bars between pairs of the triangles.  We will always
take the endpoints of the bars to be free corners of triangles that are boundary
vertices in the underlying network $G$. Combinatorially we model this by a graph
$H$ on the same vertex set as $G$, with an edge for each bar between a pair of
bodies. In this case, the Tay--Whiteley count becomes:
\begin{eqnarray}\eqlab{bar-pin-body-count}
2e' + b' \le 3v' - 3 & \qquad & \text{for all subgraphs $G'\subset G$.}
\end{eqnarray}
where $e'$ is the number of edges in $G'$ and $b'$ is the number of
edges in $H$ spanned by the vertices of $G'$.  The anchoring
procedures with sliders or immobilized vertices have analogues
in terms of adding bars to create an isostatic network. These boundary
conditions are illustrated on the right hand side of Figure~\ref{fig:Figure8ab}.
Also shown in Figure~\ref{fig:Figure8ab} in the left panel is a triangular
scheme involving alternating unpinned surface sites, that is equivalent to
anchoring. In both cases shown here the sample is free to rotate with respect to
the page.

\begin{figure}[tb]
\includegraphics[width=7.3cm,angle=0]{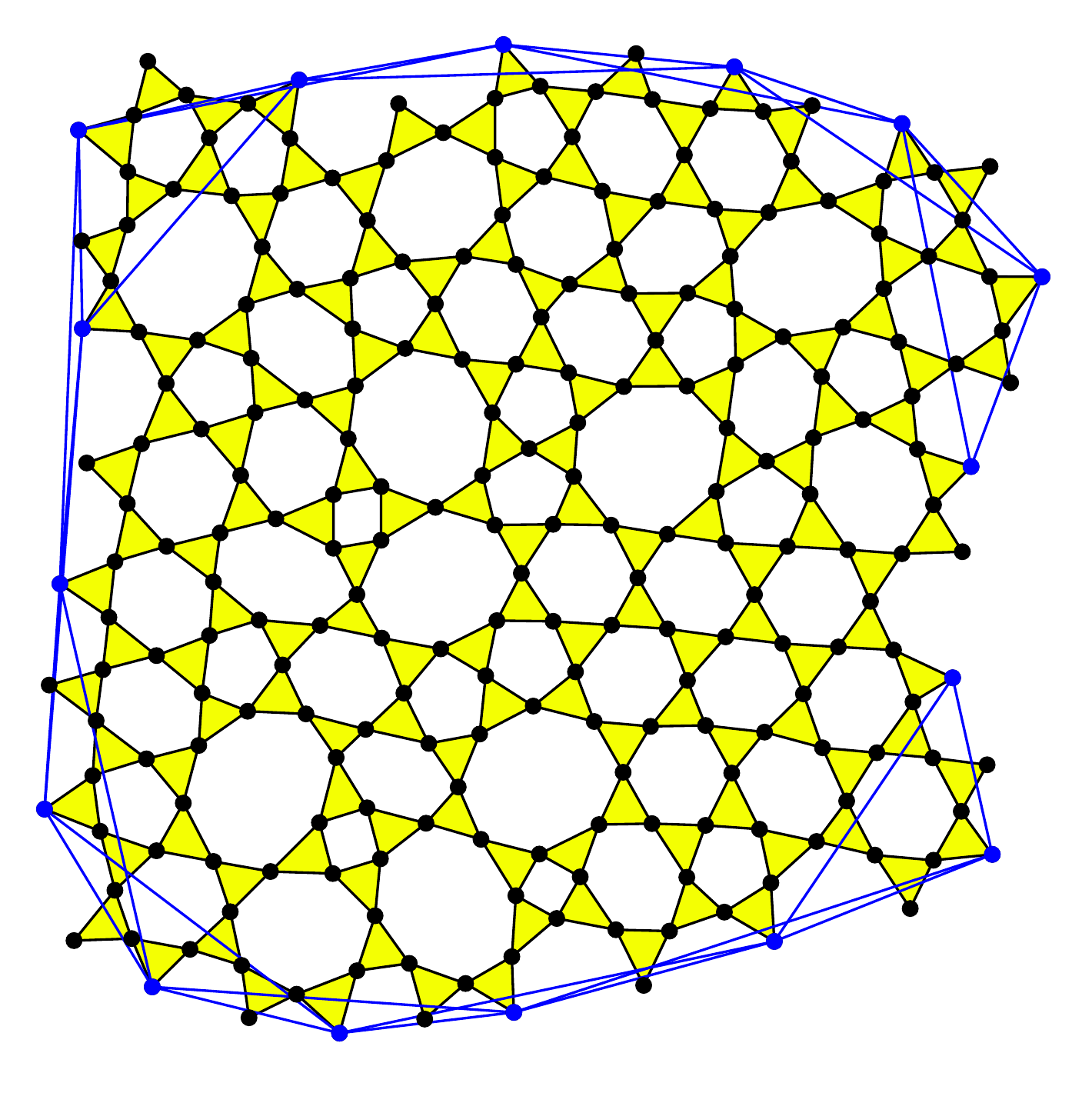}
\hfill
\includegraphics[width=7.3cm,angle=0]{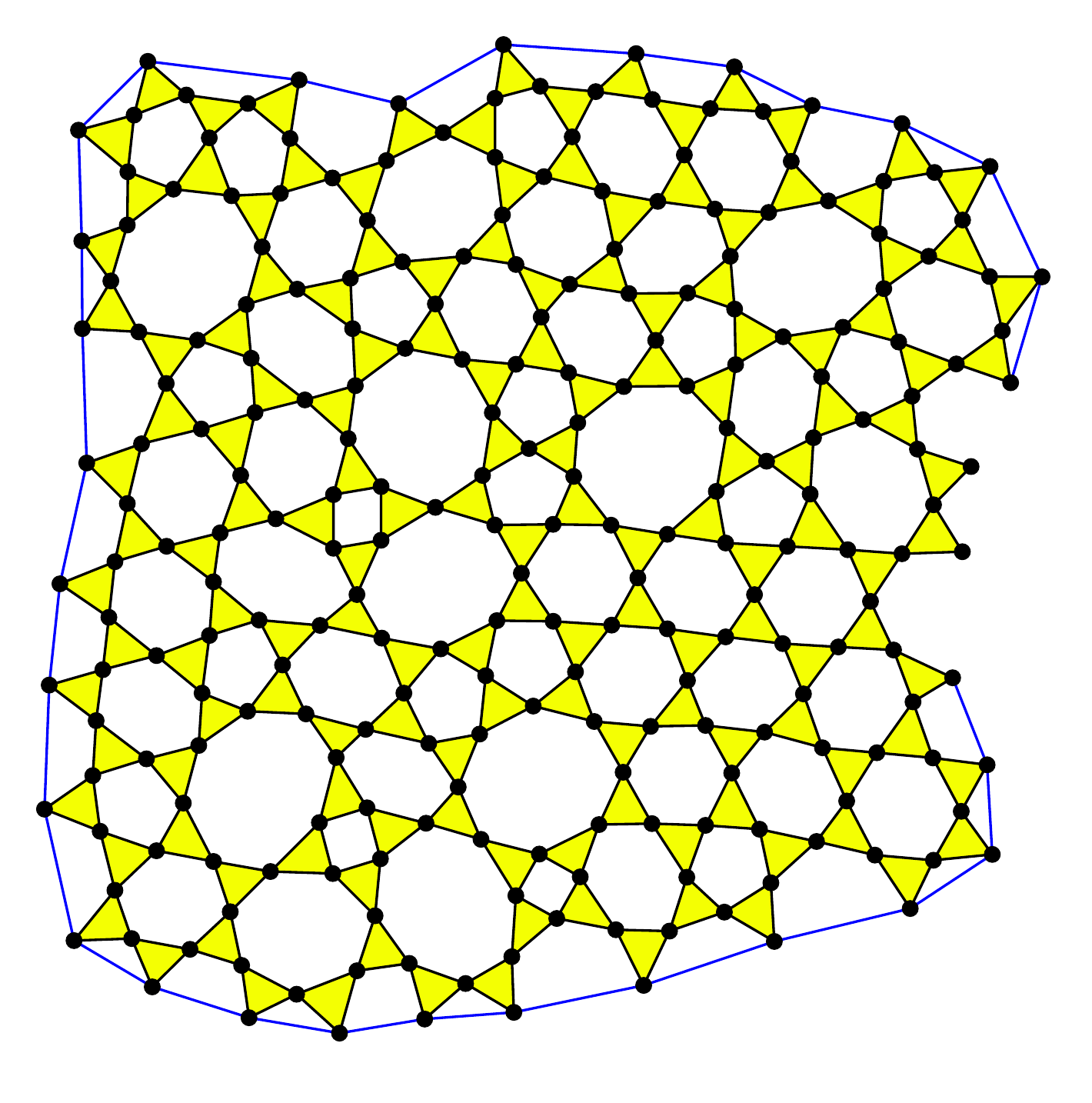}
\caption{Illustrating two additional boundary conditions used for the sample shown in
Figure~\ref{fig:Figure3}, with the 3 purple triangles at the lower left
removed to give an even number of unpinned surface sites.
On the left, alternating surface sites are connected to one another through
triangulation of first and second neighbors, with the last three connections not
needed (these would lead to redundancy). Hence there are three additional
macroscopic motions when compared to Figure~\ref{fig:Figure3} which can be
considered as being pinned to the page rather than to the {\it{internal frame}}
shown by blue straight lines. On the right we illustrate anchoring with
additional bars which connect all unpinned surface sites, except again three
are absent, to avoid redundancy, and to give the three additional macroscopic
motions when compared to Figure~\ref{fig:Figure3}}
\label{fig:Figure8ab}
\end{figure}

\begin{theorem}\theolab{bars-1}
If $G$ has boundary vertices $v_1,\ldots, v_t$, we obtain an
isostatic framework by taking the edges of $H$ to be
$v_1v_2, v_2v_3, \ldots, v_{t-3}v_{t-2}$.
\end{theorem}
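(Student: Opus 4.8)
The plan is to verify the two defining ingredients of isostaticity separately: that the global count is tight, and that the configuration is independent, i.e.\ that the hereditary count \eqref{bar-pin-body-count} holds for every subgraph. Tightness is immediate from a handshake: since $G$ has $t$ boundary (degree $2$) and $v-t$ interior (degree $3$) vertices, $2e = 2t + 3(v-t) = 3v - t$, while the number of added bars is $b = t-3$, so $2e + b = 3v - 3$, exactly the count for a planar isostatic framework. All the real work is in independence. As usual it suffices to treat vertex-induced subgraphs, since including every $G$-edge and every bar spanned by a fixed vertex set $V'$ only increases the left-hand side of \eqref{bar-pin-body-count}; the combined graph $G\cup H$ restricted to $V'$ may also be taken connected, the disconnected case following by summing (each piece carries its own $-3$).

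The key step is a bookkeeping identity converting \eqref{bar-pin-body-count} into a transparent cut condition. For an induced subgraph on $V'$, writing $\beta$ for the number of boundary vertices of $G$ in $V'$ and $\partial$ for the number of $G$-edges leaving $V'$, a handshake computation gives the surplus
\[
3v' - 2e' \;=\; \beta + \partial .
\]
Since the bars form the path $v_1 v_2 \cdots v_{t-2}$, the number of bars induced on $V'$ is $b' = p' - k$, where $p'$ counts the path vertices $v_1,\dots,v_{t-2}$ in $V'$ and $k$ is the number of maximal runs of consecutive path vertices in $V'$. Writing $\beta = p' + \epsilon$ with $\epsilon = |\{v_{t-1},v_t\}\cap V'|\in\{0,1,2\}$ (the two boundary vertices off the bar-path), the target inequality $2e' + b' \le 3v' - 3$ collapses precisely to
\[
k + \epsilon + \partial \;\ge\; 3 .
\]

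It then remains to check this in three cases. If $V' = V$, then $k+\epsilon+\partial = 1+2+0 = 3$, recovering the tight global count. If $V'$ is a proper interior subgraph ($\beta = 0$, hence carrying no bars), the requirement is just $2e'\le 3v'-3$, which is \lemref{indep}, so nothing new is needed. Finally, if $V'$ is proper and contains a boundary vertex, then $k+\epsilon\ge 1$ (a path vertex forces a run, a vertex of $\{v_{t-1},v_t\}$ forces $\epsilon\ge 1$), while $2$-connectivity of $G$ (condition~(a)) implies $2$-edge-connectivity, so no edge cut has size below $2$ and hence $\partial\ge 2$; together these give $k+\epsilon+\partial\ge 3$. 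Independence follows, and combined with the tight global count the network is isostatic.

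The main obstacle I expect is isolating this reformulation: once the surplus identity $3v'-2e'=\beta+\partial$ and the run identity $b'=p'-k$ are in hand, everything reduces to $k+\epsilon+\partial\ge 3$, and the three omitted bars $v_{t-2}v_{t-1}, v_{t-1}v_t, v_t v_1$ are seen to supply exactly the slack of $3$ that $2$-edge-connectivity consumes. Two points need care: that bars attach only to boundary vertices, so genuinely interior subgraphs carry no bars and fall back cleanly on \lemref{indep} (which is where condition~(c) is really used); and that the bound $\partial\ge 2$ is applied to arbitrary, possibly $G$-disconnected, vertex sets—legitimate here because deleting a single edge can never disconnect a $2$-connected graph.
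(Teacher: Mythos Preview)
Your proof is correct and takes a genuinely different route from the paper's. The paper's argument is much terser: after the same global count, it invokes \corref{rigcomps}---every \emph{proper} rigid subgraph of $G$ contains at most one boundary vertex---to conclude that no added bar has both its endpoints inside any rigid subgraph of $G$, and stops there. You instead verify the hereditary count \eqref{bar-pin-body-count} directly via the surplus identity $3v'-2e'=\beta+\partial$ and the reduction to $k+\epsilon+\partial\ge 3$, never touching \corref{rigcomps}. Your approach is more self-contained and makes the role of $2$-connectivity (supplying $\partial\ge 2$) and the reason for omitting exactly three bars fully explicit; the paper's approach is shorter because it recycles a structural fact already on hand, though it leaves implicit why ``no bar lands in a rigid subgraph of $G$'' still suffices once several bars have been added and rigid components have merged---a subtlety your direct count sidesteps. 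As a minor simplification: since the bars form a forest on boundary vertices, the run bookkeeping can be replaced by the cruder bound $b'\le\beta-1$ whenever $\beta\ge 1$, giving $2e'+b'\le(3v'-\beta-\partial)+(\beta-1)=3v'-\partial-1\le 3v'-3$ in one line.
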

\begin{proof}
Consider the $t-3$ new bars.  By construction and \lemref{indep} we have
$2e + t-3 =3v-t+t-3= 3v -3$.  \corref{rigcomps}
and the connectivity hypotheses imply that no rigid subgraph of $G$
has more than $1$ of its $3$ degree $2$ vertices on the boundary
of $G$.  This shows that no rigid subgraph of $G$ has a bar added to
it.
\end{proof}

\begin{theorem}\theolab{bars-2}
If $G$ has $t$ boundary vertices and $t$ is even, then taking
$H$ to be any isostatic bar-joint network with vertex set
consisting of $t/2$ boundary vertices chosen in an alternating
pattern around $C$ results in an isostatic network.
\end{theorem}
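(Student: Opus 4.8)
The plan is to show that adding the bars of $H$ keeps the bar–pin–body count \eqref{bar-pin-body-count} satisfied hereditarily and makes it hold with equality globally; together these give independence (no self-stress) plus rigidity, which is exactly isostaticity. The global count is a one-line check: since $H$ is an isostatic bar-joint network on $t/2$ vertices it has exactly $2(t/2)-3 = t-3$ bars, and summing degrees in the triangle ring network gives $2e = 3v - t$. Hence $2e + b = (3v-t)+(t-3) = 3v-3$, the equality case of \eqref{bar-pin-body-count}. Everything therefore reduces to verifying $2e' + b' \le 3v' - 3$ for every vertex-induced subgraph $T$, where $v', e'$ count the vertices and pins of $T$ and $b'$ is the number of bars of $H$ spanned by $T$ (it suffices to consider vertex-induced subgraphs, since for a fixed vertex set both $e'$ and $b'$ are then largest).

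First I would split on the number $a$ of anchored vertices (vertices of $H$'s vertex set) contained in $T$, and combine two facts. First, since $H$ is isostatic it is Laman-independent, and independence is hereditary, so the bars spanned by any $a$ of its vertices number at most $2a-3$ when $a \ge 2$, while $b' = 0$ when $a \le 1$. Second, I claim the deficiency bound $2e' \le 3v' - 2a$. Granting these: if $a \le 1$ then $b'=0$ and $2e' \le 3v'-3$ is precisely \lemref{indep}, so \eqref{bar-pin-body-count} holds; if $a \ge 2$ then $2e' + b' \le (3v'-2a) + (2a-3) = 3v'-3$, as required.

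The crux is the deficiency bound $2e' \le 3v' - 2a$, equivalently $\sum_{u\in T}(3-\deg_T u) \ge 2a$, and here I would reuse the counting argument from the proof of \theoref{thumbtack-pinning} with the anchored vertices in the role of its pinned vertices, which is legitimate since both are an alternating half of the boundary cycle $C$. That argument charges to each of the $a$ anchored vertices a \emph{distinct} unanchored vertex of degree $1$ or $2$ in $T$: traversing $C$ from one anchored vertex of $T$ to the next, either the intervening chain lies in $T$, in which case the alternating pattern supplies an unanchored boundary vertex between them (degree $2$ in $G$, hence degree $\le 2$ in $T$), or the chain exits $T$, which can only happen at a vertex of degree $1$ or $2$ in $T$. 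Each anchored vertex contributes at least $1$ to the deficiency and each distinct partner contributes at least $1$ more, yielding $3v'-2e' \ge 2a$; anchored vertices of degree $1$ in $T$ are handled by the same reduction as in \theoref{thumbtack-pinning}.

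I expect the only real difficulty to be the bookkeeping inside this deficiency argument, namely confirming that the charged unanchored vertices are genuinely distinct across the different anchored vertices and disjoint from the anchored vertices themselves; but because the anchored set is exactly every other boundary vertex of $C$, the disjoint-chains reasoning of \theoref{thumbtack-pinning} transfers unchanged. By comparison, the Laman sparsity bound on $H$ and the global degree count are routine, so no further obstacles are anticipated, and the resulting framework is isostatic.
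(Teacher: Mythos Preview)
Your proof is correct, and it takes a different route from the paper's. The paper gives a two-sentence geometric reduction: pinning $H$ rigidly to the plane and then identifying $G$'s anchored boundary vertices with the vertices of $H$ is, for rigidity purposes, the same as immobilizing those boundary vertices directly, so \theoref{thumbtack-pinning} applies as a black box and isostaticity of the combined network follows once the three pins on $H$ are released. You instead stay inside the combinatorial framework and verify the hereditary count \eqref{bar-pin-body-count} directly, combining the Laman sparsity bound $b'\le 2a-3$ on $H$ with the deficiency bound $2e'\le 3v'-2a$ extracted from the \emph{proof} of \theoref{thumbtack-pinning}. Both arguments ultimately rest on that same deficiency bound; the paper packages it as a citation plus a geometric equivalence, while you unpack it and make explicit exactly where the isostaticity of $H$ is used. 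Your version is more self-contained and would transfer with no change to any $H$ that is merely Laman-sparse with $t-3$ edges, whereas the paper's reduction is shorter but asks the reader to accept the geometric identification without a separate combinatorial check.
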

A triangulated $t/2$-gon is a simple choice for $H$.
\begin{proof}[Proof sketch]
By \lemref{indep}, we are adding enough bars to remove
all the internal degrees of freedom.  The desired statement
then follows from \theoref{thumbtack-pinning} by observing that
pinning down the boundary vertices is equivalent, geometrically,
to pinning down $H$ and then identifying the boundary vertices of $G$ to
the vertices of $H$.
\end{proof}

A result of White and Whiteley\cite{WW83} on ``tie downs'', then gives:
\begin{corollary}
In the situation of Theorems \theorefX{bars-1} and \theorefX{bars-2},
adding \emph{any} $3$ sliders results in a pinned-isostatic network.
\end{corollary}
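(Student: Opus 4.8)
The plan is to invoke the White--Whiteley tie-down theorem \cite{WW83} and reduce the entire statement to a single linear-independence fact about sliders and the space of planar rigid body motions. First I would record precisely what Theorems \theorefX{bars-1} and \theorefX{bars-2} provide: in both cases the triangle ring network together with the added bars of $H$ is isostatic, so it is infinitesimally rigid with independent constraints, and hence its space of infinitesimal flexes is \emph{exactly} the $3$-dimensional space of trivial (rigid body) motions in the plane. Equivalently, the relevant count holds with equality on all of $G$, giving $2e + b = 3v - 3$.

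Next I would state the tie-down result in the form I need: appending constraints to an isostatic framework keeps the enlarged system independent exactly when the new constraints are linearly independent as functionals on the trivial flex space, and if one adds as many independent such constraints as the dimension of that space, the result is pinned-isostatic. In dimension $2$ the trivial flex space is $3$-dimensional, so the correct number of sliders to add is $3$, matching the statement. The combined count then reads $2e + b + s = (3v - 3) + 3 = 3v$, which is the natural pinned count obtained by combining the bar count \eqref{bar-pin-body-count} with the slider count \eqref{pin-count}, at equality on $G$; the White--Whiteley theorem promotes this global count to independence for all subgraphs once the tie-down itself is independent.

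The only genuine content is therefore the independence of the three sliders against the trivial flex space. A single slider placed at a generic triangle corner restricts that corner to a generic line, which is a nonzero linear functional on the rigid motions; three such sliders yield three functionals on the $3$-dimensional trivial space that are generically linearly independent. Geometrically, three slider lines fail to immobilize a rigid body in the plane only when they are concurrent or mutually parallel, and these degeneracies are excluded by the genericity we assume throughout. Thus \emph{any} $3$ sliders, in the generic sense, form a valid tie-down.

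I expect the main obstacle to be expository rather than mathematical: phrasing the tie-down theorem so that it applies uniformly to the body-pin-with-bars frameworks of Theorems \theorefX{bars-1} and \theorefX{bars-2} and to the slider formalism underlying \eqref{pin-count}. Once both are written in a common rigidity-matrix picture, so that ``adding $3$ sliders to an isostatic body-bar network'' is literally an instance of White--Whiteley, the conclusion is immediate.
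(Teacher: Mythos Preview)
Your proposal is correct and follows exactly the paper's approach: the paper gives no proof beyond the one-line citation ``A result of White and Whiteley \cite{WW83} on `tie downs', then gives,'' and your argument simply unpacks what that citation means in this setting. Your additional remarks on why three generic sliders are independent on the trivial-motion space (non-concurrent, non-parallel lines) are accurate and make explicit what the paper leaves to the reference.
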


\subsection{Stressed regions}
So far, we have shown how to render a floppy triangle ring
network isostatic or pinned-isostatic.  It is interesting to know
when adding a single extra bar or slider results in a network that
is stressed over all its members.  This is a somewhat subtle question
when adding bars or immobilizing vertices, but it has a simple answer
for the sliding boundary conditions.

We say that a triangle ring network is \emph{irreducible} if: (a)
every minimal $2$ edge cut set either detaches a single
vertex from $G$ or both remaining components contain more than one
boundary vertex of $G$; (b) every minimal $3$ edge cut set disconnects one
vertex from $G$.

\begin{lemma}\lemlab{rigcomps-2}
A triangle ring network $G$ has no proper rigid subgraphs if and only
if $G$ is irreducible.
\end{lemma}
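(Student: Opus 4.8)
The plan is to convert rigidity of an induced subgraph into a purely combinatorial statement about boundary vertices and cut edges, and then match the two kinds of rigid subgraph that can occur against the two clauses defining irreducibility. The engine is a single counting identity. Let $T$ be a connected, vertex-induced subgraph with $v'$ vertices and $e'$ edges, let $b'$ be the number of boundary vertices of $G$ lying in $T$, and let $c'$ be the number of edges of $G$ with exactly one endpoint in $T$ (the size of the cut detaching $T$). Summing degrees in $G$ over $T$ gives
\[ 2e' + c' = 2b' + 3(v'-b') = 3v' - b', \]
so $2e' = 3v' - (b'+c')$. By \lemref{indep} the bound \eqref{tw-count} holds, forcing $b'+c'\ge 3$, with \emph{equality exactly when} $2e' = 3v'-3$, i.e.\ exactly when $T$ is rigid. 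Since $G$ is $2$-edge-connected a proper $T$ has $c'\ge 2$, and by \corref{rigcomps} a proper rigid $T$ has $b'\le 1$. Hence every proper rigid subgraph larger than a single triangle is of exactly one of two types: \emph{interior} ones, with $b'=0$ and $c'=3$, and \emph{near-boundary} ones, with $b'=1$ and $c'=2$.

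For the forward direction (a proper rigid subgraph forces reducibility) I would treat the two types separately. Given a near-boundary rigid $T$, its two cut edges form an edge cut; both sides are connected, since $T$ is, and since any splitting of $G\setminus T$ would, by $2$-edge-connectivity, require at least two cut edges per piece, which is impossible with only two available. So this is a minimal $2$-edge cut. It detaches no single vertex, because $T$ has at least two vertices and $G\setminus T$ contains the $t-1\ge 2$ remaining boundary vertices; and $T$ contains exactly one boundary vertex, so the two sides do not both contain more than one. This violates clause (a). Symmetrically, an interior rigid $T$ gives a minimal $3$-edge cut whose two sides each have at least two vertices (the side $G\setminus T$ holds all $t\ge 3$ boundary vertices), violating clause (b). Note that the hypothesis $t\ge 3$ is precisely what excludes the degenerate possibility that the complement of a rigid subgraph is a single vertex.

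For the converse (reducibility forces a rigid subgraph) the $2$-edge case is immediate: if clause (a) fails there is a minimal $2$-edge cut with both sides of size at least two and some side $T$ with at most one boundary vertex; since $c'=2$ here, the identity forces $b'\ge 1$, hence $b'=1$ and $b'+c'=3$, so $T$ is a near-boundary rigid subgraph. If clause (b) fails there is a minimal $3$-edge cut not detaching a single vertex, and I would show that the relevant side is an \emph{interior} subgraph, giving $b'=0$, $c'=3$, and again rigidity.

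The main obstacle is exactly this last point: a minimal $3$-edge cut does not a priori bound an interior subgraph. To control it I would use a parity argument along the cycle $C$ — traversing $C$, the number of cut edges lying on $C$ counts sign changes between the two sides and is therefore even, hence $0$ or $2$ for a $3$-cut. When it is $0$, all of $C$ lies on one side, so the other side carries no boundary vertex and is interior, yielding the desired rigid subgraph. The delicate case is $2$ cut edges on $C$, where both sides meet $C$ and neither is rigid; here clause (b) must be understood, in tandem with axiom (c) of triangle ring networks (which constrains precisely cuts separating degree-$3$-only subgraphs), as concerning the interior-separating $3$-cuts, so that the rigidity-relevant cuts are exactly those avoiding $C$. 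Pinning down this reading and showing it is the correct interpretation of (b) is the crux; once it is in place, matching the two clauses of irreducibility to the interior and near-boundary subgraph types closes the equivalence.
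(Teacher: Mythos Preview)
Your approach mirrors the paper's: both classify proper rigid subgraphs by whether they carry $b'=0$ or $b'=1$ boundary vertices of $G$ (equivalently, whether the detaching cut has size $3$ or $2$), and then match the two types to clauses (b) and (a) of irreducibility. Your counting identity $2e' = 3v' - (b'+c')$ makes explicit what the paper extracts more tersely from \corref{rigcomps}, and your treatment of the forward direction and of the converse for clause (a) is essentially the paper's argument, written out more carefully.

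Where you diverge is exactly where you say the crux lies. For the converse of clause (b) one needs that a minimal $3$-edge cut (not detaching a single vertex) necessarily has an interior side. The paper simply asserts this: ``observe that cut sets of size $3$ are minimal if and only if they disconnect an interior subgraph on one side.'' You correctly recognise that this needs an argument, and your parity observation on $C$ is the right first step: it shows that a minimal $3$-cut with no interior side would meet $C$ in exactly two edges, with both arcs of $C$ carrying boundary vertices. But your proposed resolution---reading clause (b), in tandem with axiom (c), as implicitly restricted to interior-separating $3$-cuts---is a reinterpretation of the definition rather than a proof of the assertion; it changes the statement instead of establishing it. So you have located precisely the step that the paper's own proof leaves unjustified, and articulated it more sharply than the paper does, but you have not actually closed it. The paper, for its part, accepts the ``observe'' at face value and moves on.
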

\begin{proof}
Recall, from \corref{rigcomps}, that a proper rigid subgraph $T$ of $G$
has exactly $3$ vertices of degree $2$ and the rest degree $3$.  Thus, $T$
can be disconnected from $G$ by a cut set of size $2$ or $3$.

In the former case, \corref{rigcomps} implies that exactly one of the
degree $2$ vertices in $T$ is a boundary vertex of $G$.  This means
that $T$ witnesses the failure of (a), and $G$ is not irreducible.
Conversely, (a) implies that, for a $2$ edge cut set not disconnecting
one vertex, either side is either a chain of boundary vertices or has
at least $4$ vertices of degree $2$.

Finally, observe that cut sets of size $3$ are minimal if and only if they disconnect
an interior subgraph on one side.  \corref{rigcomps} then implies that there
is a proper rigid component that is an interior subgraph of $G$
if and only if (b) fails.
\end{proof}

\begin{theorem}\theolab{stresses}
Let $G$ be a triangle ring network anchored using the procedure
of \theoref{slider-pinning}.  Adding \emph{any} bar or slider to
$G$ results in a network with all its members stressed if and
only if $G$ is irreducible.
\end{theorem}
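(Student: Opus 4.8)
The plan is to pass to the rigidity matroid underlying the Katoh--Tanigawa count \eqref{pin-count} and read the statement off the structure of fundamental circuits. Write $B$ for the constraint set of the pinned-isostatic network produced by \theoref{slider-pinning}; by that theorem $B$ is a basis. Adding a single bar or slider $e$ therefore creates exactly a one-dimensional space of self-stresses, whose support is the fundamental circuit $C(e,B)$, the unique circuit of $B\cup\{e\}$. The assertion that \emph{all} members are stressed is then the statement $C(e,B)=B\cup\{e\}$, and a member $f$ fails to be stressed precisely when $f\notin C(e,B)$, equivalently when $(B\setminus\{f\})\cup\{e\}$ is again pinned-isostatic. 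So I must show that such an unstressed pair $(e,f)$ exists if and only if $G$ is reducible.

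For the reducible-implies-some-member-unstressed direction I would use \lemref{rigcomps-2}: reducibility yields a proper rigid subgraph $T$, which by \corref{rigcomps} has exactly three degree-$2$ vertices and meets \eqref{tw-count} with equality, so $2e_T=3v_T-3$ by \lemref{indep}. Adding a bar $e$ between two of the triangles of $T$ makes $T\cup\{e\}$ violate \eqref{bar-pin-body-count} (since $2e_T+1>3v_T-3$), hence dependent. As $B$ is independent, the unique circuit through $e$ lies inside $T\cup\{e\}$, so every member lying outside the proper subgraph $T$ is unstressed. This is the short half.

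The substantive direction is that irreducibility forces $C(e,B)=B\cup\{e\}$ for every addable $e$. Suppose instead $f\notin C(e,B)$ and set $C:=C(e,B)$, so $C-e\subseteq B\setminus\{f\}$ is independent with induced triangle set $R$. Since a circuit with one element deleted is isostatic on its own support, $C-e$ is isostatic on $R$; comparing with the counts available on $G[R]$ (using that $B$ is independent, so $G[R]$ obeys \eqref{tw-count} and \eqref{pin-count}) forces $C-e$ to consist of \emph{all} the $G$-edges and sliders carried by $R$, so $G[R]$ is itself isostatic, in one of two ways. Either it is free-rigid, meeting \eqref{tw-count} with equality, or it is pinned-isostatic, satisfying $2e_R+b_R=3v_R$, where $b_R$ counts the degree-$2$ vertices of $G$ inside $R$. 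The key device is the degree identity $2e_R+e_\partial=\sum_{v\in R}\deg_G(v)=3v_R-b_R$, where $e_\partial$ is the number of edges leaving $R$. In the pinned case this gives $e_\partial=0$, so $R$ is a union of components of $G$; as $G$ is connected, $R=V(G)$, whence $C=B\cup\{e\}\ni f$, a contradiction. In the free-rigid case the same identity gives $e_\partial=3-b_R$, and unless $R=V(G)$ this is a proper rigid subgraph, contradicting irreducibility through \lemref{rigcomps-2}.

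I expect the main obstacle to be this last direction, with two points deserving care. First, one must justify that the support $R$ of the fundamental circuit is genuinely isostatic \emph{on the induced subgraph} $G[R]$, not merely that it contains $C-e$; this is exactly where independence of $B$ and the matroidal fact ``a circuit minus an element is a basis of its closure'' must be combined with \eqref{tw-count} and \eqref{pin-count}. Second, the free-rigid subcase leaves the degenerate possibility $R=V(G)$, which occurs exactly when $G$ itself is free-rigid, i.e. when $t=3$: there a bar can be absorbed entirely by the interior triangles and the three sliders stay unstressed, so the clean statement holds once $t\ge 4$, and I would either record $t\ge 4$ as a hypothesis or dispatch $t=3$ by hand. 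The degree-counting identity is what makes the pinned subcase collapse at once, and I would foreground it as the crux of the proof.
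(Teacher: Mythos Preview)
Your approach is correct and takes a different route from the paper's. The paper separates the slider and bar cases and invokes a result of Streinu--Theran \cite{ST10} as a black box: for an added slider the stressed subgraph must already have been fully pinned, and for an added bar the stress covers all of $G$ unless the endpoints share a free-rigid subgraph; a one-line degree count (slider case, no irreducibility needed) and \lemref{rigcomps-2} (bar case) then finish. You instead recover the structure of the fundamental circuit directly from the counts. The step you flag---that $C-e$ must exhaust the induced constraint set on its support $R$---is indeed the crux, and it does follow, because circuits in these count matroids are tight on their support: in the free-type case $2|C-e|=3|R|-3\ge 2|E(G[R])|$ by \lemref{indep}, and in the pinned-type case $2e_{C-e}+s_{C-e}=3|R|\ge 2e_{B|_R}+s_{B|_R}$ by independence of $B$, each forcing equality and hence $C-e=E(G[R])$ or $C-e=B|_R$. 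Your degree identity then finishes both branches exactly as you describe, and in particular the pinned branch collapses without irreducibility, matching the paper's footnote. Your approach buys self-containment; the paper's buys brevity.

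Your observation about $t=3$ is sharp and worth recording: when $G$ itself is free-rigid, an added bar produces the free-type circuit $E(G)\cup\{e\}$, which omits the three sliders, so ``all members stressed'' fails even for irreducible $G$. The paper's phrasing in terms of the stressed \emph{subgraph} being all of $G$ does not distinguish the support of the stress reaching every body from every constraint carrying nonzero stress, so this edge case is glossed over there as well; your suggested hypothesis $t\ge 4$ is the clean fix.
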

\begin{proof}
First consider adding a slider.  Because $G$ is pinned-isostatic,
the slider creates a unique stressed subgraph $T$.  A result of
Streinu-Theran\cite{ST10} implies that $T$ must have been
fully pinned in $G$.  Since any proper subgraph has an unpinned
vertex of degree $1$ or $2$, \eqref{pin-count} holds strictly.  Thus,
the stressed graph is all of $G$.\footnote{It is worth noting that,
so far, irreducibility of $G$ was not required.  It is needed only
for adding bars.}

If we add a bar, there is also a unique stressed subgraph.  This will
be all of $G$, again by the result of Streinu-Theran\cite{ST10},
unless both endpoints of the bar are in a common rigid subgraph.
That was ruled out by assuming that $G$ is irreducible.
\end{proof}

\section{Conclusions}
In this paper we have demonstrated boundary conditions for locally isostatic networks
that incorporate the right number of constraints at the surface so that the whole
network is isostatic.  These boundary conditions should be useful in numerical
simulations which involve finite pieces of locally isostatic networks.
The boundary can be quite complex and involve both an external boundary with
internal holes.

Our derivation of the new boundary conditions is based on a structural
characterization of graphs which capture the combinatorics of
silica bilayers.  This shows that the degrees of freedom are associated
with unpinned triangle corners on the boundary.  We then present two methods
to completely immobilize a triangle ring network: by attaching the boundary
to a wire rigidly attached to the plane; and by completely immobilizing
alternate vertices on the boundary.  To render a triangle ring network
isostatic, we also have two methods: adding bars between adjacent boundary
vertices in cyclic order; and attaching alternating boundary vertices to
an auxiliary graph that functions as a rigid frame.

Although our definition of a triangle ring network is most easily visualized
when $G$ is planar and $C$ is the outer face, the combinatorial setup
is quite a bit more general.  The natural setting for networks with
holes is to assume planarity, and then that all the degree $2$ vertices
are on disjoint facial cycles in $G$.  The key thing to note is that
the  cycle $C$ in our definition does not need to be facial for
\theoref{thumbtack-pinning}. For example, in \figref{Figure7},
$C$ goes around the boundary of an interior face that contains degree $2$
vertices.  In general, the existence of an appropriate cycle $C$ is a
non-trivial question, as indicated by \figref{Figure7} (See also Figure
\ref{fig:Figure5ab} for other examples of complex anchored boundary
conditions).

What is perhaps more striking is that \theoref{slider-pinning} still applies
whether or not such a $C$ exists, provided faces in $G$ defining the holes
in the sample are disjoint from the boundary and each other.

In applying anchored boundary conditions, it is important that the complete
boundary has an even number of unpinned sites, which can include internal holes,
which must then be connected using the green lines shown in the various figures.
This gives a practical way of setting up calculations with anchored boundary
conditions in samples with complex geometries and missing areas.

\begin{acknowledgements}
Support by the Finnish Academy (AKA) Project COALESCE is acknowledged by LT.
We thank Mark Wilson and Bryan Chen for many useful discussions and comments.
This work was initiated at the AIM workshop on configuration spaces,
and we thank AIM for its hospitality.
\end{acknowledgements}

\appendix

\section{Implementation details}
Algorithms \ref{alg:slider-pinning}--\ref{alg:bars-2}
in this appendix give a procedural
description of the four boundary conditions discussed
in this paper, and make clear the subtle differences
between them.
All of the algorithms in this appendix take as input a
finite part of a locally isostatic network and output
a globally isostatic one that is appropriate for
further study.  Which of the boundary conditions is most
appropriate will depend on the intended application.

Before describing the algorithms, we give
more detail on how to encode a triangle ring network and
the associated set of first-order geometric constraints.

\subsection{Encodings}
Computationally, it is convenient to work not only with the
body graph $G$, as in the main body of the paper,
but also with its \emph{line graph} $G^*$ that has as its vertices the triangle
corners and edges the triangle sides.  We denote by $n$ and $m$
the number of vertices and edges in $G$ and $n^*$ and $m^*$ the
same quantities for $G^*$.  Vertices in $G$ are denoted by $v, w,\ldots$
and vertices in $G^*$ by $v^*, w^*,\ldots$.  We assume that there is a
constant-time mapping $\tau : V(G)\to V(G^*)^3$ that maps
each vertex $v$ of $G$ to the associated triangle $\{v^*_v,w^*_v,x^*_v\}$
in $G^*$.  For each boundary vertex $v$ of $G$, $\tau(v)$ will have
a unique degree $2$ vertex, which we denote by $T(v)$.

Experimentally, $G^*$ will always be immediately visible.  It is also
computable in time $O(n)$ from $G$.  If $G$ is planar with given
facial structure\footnote{For example, given as a doubly-connected edge list.  See, e.g.,
Section 2.2 of the textbook by de Berg, et al.\cite{dBCvkO08}}, then $G^*$ also has
an natural planar embedding, and vice-versa.  Further, if $G^*$ contains no pair of facial
triangles with a common edge, then $G$ is determined by $G^*$.  This is the case in
all of our examples.

We also assume that we have access to the coordinates
of the vertices of $G^*$.  We denote these by
$p(v^*) = (x_{v^*},y_{v^*})$ for each vertex $v^*$ of $G^*$ and call $p$
a \emph{placement}.

\subsection{First-order geometric constraints}
The allowed first-order motions $\pdot$ of a triangle ring network
$G$ satisfy the system
\begin{eqnarray}\eqlab{inf-bars}
\iprod{p(v^*) - p(w^*)}{\pdot(v^*) - \pdot(w^*)} = 0 & \qquad\text{for all edges $v^*w^*\in E(G^*).$}
\end{eqnarray}
We assume that $p$ maximizes the rank of \eqref{inf-bars}, which happens for almost all
choices of $p$.  By the theorems in this paper, this rank is equal to $m^*$ when $G$ is a
triangle ring network.

Now identify a set $S^*\subset V(G^*)$ of vertices to which we will add one slider
constraint.  Assign a vector $s(v^*) = (a_{v^*},b_{v^*})$ to each $v^*\in S^*$.  The
slider constraints on the first-order motions are:
\begin{eqnarray}\eqlab{inf-sliders}
\iprod{s(v^*)}{\pdot(v^*)} = 0 & \qquad \text{for all $v^*\in S^*.$}
\end{eqnarray}
To guarantee that the combined system \eqref{inf-bars}--\eqref{inf-sliders}
achieves its maximum rank ($2n^*$ for our
sliding boundary condition), it is sufficient to pick each $s(v^*)$ uniformly
at random from the unit circle.\footnote{See the appendix of Király et al.\cite{KTT15} for a
detailed justification of this and similar statements relating genericity and random
sampling.}

\subsection{Implementing slider-pinning}
\algref{slider-pinning} shows how to implement the sliding boundary condition of
\theoref{slider-pinning}.
\begin{algorithm}[H]
\caption[Sliding boundary conditions]{Sliding boundary conditions.\\
\textit{Input:} Triangle ring network $G$, line graph $G^*.$\\
\textit{Output:} Slider constraints implemementing the sliding boundary condition. \alglab{slider-pinning}}
\begin{algorithmic}[1]
\item[1:] Initialize $S^*$ to the empty set.
\item[2:] For each boundary vertex $v$ of $G$, add $T(v)$ to $S^*$.
\item[3:] For each $v^*$ in $S^*$, generate a random vector $s(v^*)$ on the unit circle,
and use it to create a slider constraint of the form \eqref{inf-sliders}.
\end{algorithmic}
\end{algorithm}

\subsection{Implementing immobilized vertices}
To implement \theoref{thumbtack-pinning}, we could put two independent sliders at vertices of
$G^*$.  However, it is simpler to regard \eqref{inf-bars} as a matrix and then discard
the columns corresponding to immobilized vertices.
\begin{algorithm}[H]
\caption[Pinned boundary conditions]{Pinned boundary conditions.\\
\textit{Input:} Triangle ring network $G$ with an even number of boundary vertices,
line graph $G^*$, boundary cycle $C$.\\
\textit{Output:} Linear constraints pinning \eqref{inf-bars}. \alglab{thumbtack-pinning}}
\begin{algorithmic}[1]
\item[1:] Let $M$ be the matrix of the system \eqref{inf-bars}.
\item[2:] Pick a vertex $v_0$ on $C$.  Set $v = v_0$.  Set $b=1$.
\item[3:] If $b=1$, discard the columns in $M$ corresponding to $T(v)$,
and set $b=0$.  Otherwise set $b=1$.  Replace $v$ with its successor on $C$.
\item[4:] If $v = v_0$, output $M$.  Otherwise, go to step 3.
\end{algorithmic}
\end{algorithm}
Observe that the loop implemented in steps $2$--$4$ shows how to obtain the free corners of the
boundary vertices of $G$.

\subsection{Implementing anchoring with additional bars}
Anchoring with additional bars amounts to adding edges to $G^*$.  Thus, we describe
them graph theoretically only.  If the geometric constraints are desired, simply
use the new graph to write down \eqref{inf-bars}.
\begin{algorithm}[H]
\caption[Anchoring with bars I]{Anchoring with bars I.\\
\textit{Input:} Triangle ring network $G$,
line graph $G^*$, boundary cycle $C$.\\
\textit{Output:} An isostatic graph containing $G^*.$ \alglab{bars-1}}
\begin{algorithmic}[1]
\item[1:] Enumerate the boundary vertices $v_1,\ldots,v_t$ of $G$, ordered along $C$.
\item[2:] Set $H=G^*$.
\item[3:] Add edges $T(v_1)T(v_2),\ldots, T(v_3)T(v_2)$ to $H$.
\item[4:] Output $H$.
\end{algorithmic}
\end{algorithm}
The left panel of \figref{Figure8ab} takes the graph $H$ from \theoref{bars-2} to
be a ``zig-zag triangulation'' of a polygon, which is easily seen to be isostatic.
This next algorithm gives the implementation of \theoref{bars-2} using this
choice.
\begin{algorithm}[H]
\caption[Anchoring with bars I]{Anchoring with bars II.\\
\textit{Input:} Triangle ring network $G$ with an even number of boundary vertices,
line graph $G^*$, boundary cycle $C$.\\
\textit{Output:} An isostatic graph containing $G^*.$ \alglab{bars-2}}
\begin{algorithmic}[1]
\item[1:] Enumerate alternating boundary vertices $v_1,v_3,\ldots,v_t-1$ of $G$, ordered along $C$.
\item[2:] Set $H=G^*$.
\item[3:] Add edges $T(v_1)T(v_3),T(v_3)T(v_5),T(v_1)T(v_5)$ to $H$.
\item[4:] For $i=7,9,\ldots,t-1$, add the edges $T(v_{i-2})T(v_i),T(v_{i-4})T(v_i)$ to $H$.
\item[5:] Output $H$.
\end{algorithmic}
\end{algorithm}
\end{document}